\newtheorem{pro}{Proposition}
\newtheorem{teo}{Theorem}
\newtheorem{rem}{Remark}
\newtheorem{exa}{Example}
\title[Stabilizer quantum codes from $J$-affine variety codes]{Stabilizer quantum codes from $J$-affine variety codes and a new Steane-like enlargement}
\author{Carlos Galindo, Fernando Hernando and Diego Ruano}
\curraddr{\texttt{Carlos Galindo and Fernando Hernando:} Instituto
Universitario de Matem\'aticas y Aplicaciones de Castell\'on and
Departamento de Matem\'aticas, Universitat Jaume I, Campus de Riu
Sec. 12071 Castell\'{o} (Spain)\\
\texttt{Diego Ruano:} Department of Mathematical Sciences, Aalborg University, Fredrik Bajers Vej 7G, 9220 Aalborg East (Denmark).
}
\email{{\rm Galindo:} galindo@uji.es; {\rm Hernando:} carrillf@uji.es; {\rm Ruano:} diego@math.aau.dk}
\date{}
\thanks{Supported by the Spanish Ministry of Economy: grant MTM2012-36917-C03-03, the University Jaume I: grant PB1-1B2012-04, the Danish Council for Independent Research, grant DFF-4002-00367 and the ``Program for Promoting the Enhancement of Research
Universities'' at Tokyo Institute of Technology.}
\keywords{Stabilizer $J$-affine variety codes; Subfield-subcodes; Steane enlargement; Hermitian and Euclidean duality}
\begin{document}

\begin{abstract}
New stabilizer codes with parameters better than the ones available in the literature are provided in this work, in particular quantum codes with parameters $[[127,63, \geq 12]]_2$ and $[[63,45, \geq 6]]_4$ that are records. These codes are constructed  with a new generalization of the Steane's enlargement procedure and by considering orthogonal subfield-subcodes --with respect to the Euclidean and Hermitian inner product-- of a new family of linear codes, the $J$-affine variety codes.
\end{abstract}

\maketitle

\section*{Introduction and preliminaries}

Polynomial time algorithms for prime factorization and discrete logarithms on quantum computers were given by Shor in 1994 \cite{22RBC}.  Thus, if an efficient quantum computer existed (see  \cite{BCM,SS}, for recent advances), most popular cryptographic systems could be broken and much computational work could be done much faster. Unlike classical information, quantum information cannot be cloned \cite{8AS, 26RBC}, despite this fact quantum (error-correcting) codes do exist \cite{23RBC, 95kkk}. The above facts explain why, in the last decades, the interest in quantum computations and, in particular, in quantum coding theory grew dramatically. There exists an extensive literature on quantum codes, see for instance  \cite{7kkk, 8kkk, 18kkk, 19kkk, 20kkk, 38kkk, 45kkk} for the binary case and \cite{AK, BE, 35kkk, opt, lag3, 71kkk} for the general case.

Classical linear codes and Hermitian and Euclidean inner products are useful tools to construct codes in a class of quantum codes named stabilizer codes. In this paper, we introduce $J$-affine variety codes and use them, together with the Hamada's generalization \cite{ham} of the Steane's enlargement procedure, to derive new stabilizer codes. The main procedures in the literature we use are collected in Theorems \ref{bueno} and \ref{ham}. Furthermore, we  introduce and consider a new enlargement that we state in Theorem \ref{elnuevo}. This result extends the mentioned Hamada's generalization and is proved in the appendix. With the above ideas, we obtain binary stabilizer codes which are records in \cite{codet} and nonbinary codes that exceed the Gilbert-Varshamov bounds.

Set $q=p^r$ a positive power of a prime number $p$ and let $\mathbb{C}$ be the complex numbers. A {\it stabilizer code} $\mathcal{C} \neq \{0\}$ is the common eigenspace of an abelian subgroup $\Delta$ of the error group $G_n$ generated by a nice error basis on the space $\mathbb{C}^{q^n}$, $n$ being a positive integer. The code $\mathcal{C}$ has minimum distance $d$ whenever all error in $G_n$ with weight less than $d$ can be detected or have no effect on $\mathcal{C}$ but some error of weight $d$ cannot be detected. In addition,  $\mathcal{C}$ is called to be {\it pure} if $\Delta$ has  not non-scalar matrices with weight less than $d$. Finally, a code as above is an $[[n,k,d]]_q$-code when it is a $q^k$-dimensional subspace of $\mathbb{C}^{q^n}$ and has minimum distance $d$ (see for instance \cite{19kkk,kkk}). Recall that the Hermitian inner product of two vectors $\mathbf{x} =(x_1,x_2, \ldots, x_n)$ and $\mathbf{y}=(y_1,y_2, \ldots, y_n)$ in the vector space $\mathbb{F}_{q^2}^n$ is defined as $\mathbf{x}  \cdot_h \mathbf{y}= \sum x_i y_i^q$ and the Euclidean product of $\mathbf{x}$ and $\mathbf{y}$ in  $\mathbb{F}_{q}^n$ as $\mathbf{x} \cdot \mathbf{y} = \sum x_i y_i$.  Given a linear code $C$ in $\mathbb{F}_{q^2}^n$ (respectively, $\mathbb{F}_{q}^n$), the Hermitian (respectively, Euclidean) dual space is denoted by $C^{\perp_h}$ (respectively, $C^\perp$).

\begin{teo}
\label{bueno}
\cite{kkk,Akk} The following two statements hold.
\begin{enumerate}
    \item Let $C$ be a linear $[n,k,d]$ error-correcting  code over  $\mathbb{F}_{q}$ such that $C^\perp \subseteq C$. Then, there exists an $[[n, 2k -n, \geq d]]_q$ stabilizer code which is pure to $d$. If the minimum distance of $C^\perp$ exceeds $d$, then the stabilizer code is pure and has minimum distance $d$.
        \item Let $C$ be a linear $[n,k,d]$ error-correcting  code over  $\mathbb{F}_{q^2}$ such that $C^{\perp_h }\subseteq C$. Then, there exists an $[[n, 2k -n, \geq d]]_q$ stabilizer code which is pure to $d$. If the minimum distance $d^{\perp_h}$ of the code $C^{\perp_h}$ exceeds $d$, then the stabilizer code is pure and has minimum distance $d$.
\end{enumerate}

\end{teo}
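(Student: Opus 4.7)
The plan is to realize the desired stabilizer code as the common eigenspace of an abelian subgroup $\Delta$ of $G_n$ constructed directly from $C$, and to read off the parameters from the structure of $C$ and $C^\perp$. Both parts of the theorem follow the same strategy; the Hermitian case will be reduced to the Euclidean one by a standard trace/symplectic recipe.

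For part (1), I would fix a nice error basis $\{X(\mathbf{a})Z(\mathbf{b}) : \mathbf{a},\mathbf{b} \in \mathbb{F}_q^n\}$ on $\mathbb{C}^{q^n}$ satisfying the symplectic commutation rule
\[
X(\mathbf{a})Z(\mathbf{b})\, X(\mathbf{a}')Z(\mathbf{b}') = \chi\!\left(\mathbf{b}\cdot\mathbf{a}' - \mathbf{a}\cdot\mathbf{b}'\right)\, X(\mathbf{a}')Z(\mathbf{b}')\, X(\mathbf{a})Z(\mathbf{b})
\]
for a nontrivial additive character $\chi$ of $\mathbb{F}_q$, and let $\Delta$ be the subgroup generated by the center of $G_n$ together with $\{X(\mathbf{a})Z(\mathbf{b}) : \mathbf{a}, \mathbf{b} \in C^\perp\}$. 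Since $C^\perp \subseteq C = (C^\perp)^\perp$ forces $\mathbf{a}\cdot\mathbf{b}' = \mathbf{a}'\cdot\mathbf{b} = 0$ on all such pairs, $\Delta$ is abelian, and a rank count gives $|\Delta/Z(G_n)| = q^{2(n-k)}$, so each common eigenspace has dimension $q^{2k-n}$. The standard characterization identifies undetectable errors with operators whose label $(\mathbf{a},\mathbf{b})$ lies in $(C \times C) \setminus (C^\perp \times C^\perp)$; the quantum weight of such an operator is at least $d$, yielding the distance bound and purity to $d$. If moreover $d(C^\perp) > d$, then any weight-$d$ element of $C$ is forced to lie outside $C^\perp$, so an undetectable error of weight exactly $d$ exists, proving purity with equality.

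Part (2) would follow by picking a trace-orthogonal $\mathbb{F}_q$-basis of $\mathbb{F}_{q^2}$ and passing through the resulting $\mathbb{F}_q$-linear isomorphism $\phi \colon \mathbb{F}_{q^2}^n \to \mathbb{F}_q^{2n}$ that converts the Hermitian inner product on $\mathbb{F}_{q^2}^n$ into the symplectic form on $\mathbb{F}_q^n \times \mathbb{F}_q^n$ appearing in the commutation relation above. Under this correspondence, $C^{\perp_h} \subseteq C$ becomes exactly the abelianness condition for the analogous $\Delta$ inside $G_n$, so the argument of part (1) applies verbatim; and since the Hamming weight of a vector in $\mathbb{F}_{q^2}^n$ equals the quantum weight of its image under $\phi$, the distance analysis transfers without loss.

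The main obstacle I anticipate is not in any single step but in the careful bookkeeping required to keep the three inner products (Euclidean on $\mathbb{F}_q^n$, Hermitian on $\mathbb{F}_{q^2}^n$, and symplectic on $\mathbb{F}_q^n \times \mathbb{F}_q^n$) and the two alphabet sizes ($q$ versus $q^2$) mutually consistent, so that part (2) really yields an $[[n, 2k-n, \geq d]]_q$ code acting on $\mathbb{C}^{q^n}$ rather than on $\mathbb{C}^{q^{2n}}$, with the same $n$ as the length of $C$ over $\mathbb{F}_{q^2}$.
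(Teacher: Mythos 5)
This theorem is quoted in the paper from \cite{kkk,Akk} and is not proved there, so there is no internal proof to compare against; your reconstruction follows the standard argument of those references (the CSS construction via the symplectic self-orthogonality of $C^\perp \times C^\perp$, whose symplectic dual is $C\times C$, and the Hermitian case via a weight-preserving $\mathbb{F}_q$-linear map $\phi\colon \mathbb{F}_{q^2}^n \to \mathbb{F}_q^{2n}$) and is essentially correct. The only point deserving more care than you give it is the claim that $\phi$ ``converts'' the Hermitian form into the symplectic one: the honest statement is that Hermitian orthogonality implies trace-alternating orthogonality and that, for $\mathbb{F}_{q^2}$-linear codes, a dimension count upgrades this to the equality $\phi(C^{\perp_h}) = \phi(C)^{\perp_s}$, which is exactly what is needed for the argument of part (1) to apply.
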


Codes obtained as described in Item (1) of Theorem \ref{bueno} are usually referred to as obtained from the CSS construction \cite{20kkk,95kkk}. This last procedure is not only useful for quantum error-correction but also for privacy amplification of quantum cryptography \cite{3hagiw}. In addition, it has been extended to construct asymmetric quantum codes which are suitable for quantum mechanical systems where the phase-flip errors happen more frequently than the bit-flip errors \cite{sar1, ez}. The parameters of the codes coming from Item (1) of Theorem \ref{bueno}  can be improved with the Hamada's generalization \cite{ham} of the Steane's enlargement procedure \cite{Steane-E}. Let us state the result, where $\mathrm{wt}$ denotes minimum weight.
\begin{teo} \label{ham} \cite{ham}
Let $C$ be an $[n,k]$ linear code over the field $\mathbb{F}_q$ such that $C^\perp \subseteq C$. Assume that $C$ can be enlarged to an $[n,k']$ linear code $C'$, where $k' \geq k + 2$. Then, there exists a stabilizer code with parameters $[[n,k+k'-n, d \geq \min \{d', \lceil \frac{q+1}{q} d" \rceil\} ]]_q$, where $d' = \mathrm{wt} (C \setminus C'^\perp)$ and $d" = \mathrm{wt} (C' \setminus C'^\perp)$.
\end{teo}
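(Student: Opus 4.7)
I would prove this theorem in the symplectic stabilizer framework. Recall that a pure $[[n,k_Q,d]]_q$ stabilizer code is equivalent to a self-orthogonal $\mathbb{F}_q$-linear subspace $S\subseteq\mathbb{F}_q^{2n}$ of dimension $n-k_Q$ under the symplectic form $\langle(a|b),(a'|b')\rangle_s=a\cdot b'-a'\cdot b$, with $d=\min\{\mathrm{wt}(v):v\in S^{\perp_s}\setminus S\}$ where $\mathrm{wt}$ denotes symplectic weight. Since $C^{\perp}\subseteq C\subseteq C'$ forces $C'^{\perp}\subseteq C^{\perp}\subseteq C\subseteq C'$, the pair $(C,C')$ satisfies the CSS hypothesis, and the plain CSS subspace $S_0=C'^{\perp}\oplus C^{\perp}$ already has the correct dimension $2n-k-k'$ and gives an $[[n,k+k'-n,\min\{d',\mathrm{wt}(C'\setminus C^{\perp})\}]]_q$ code. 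Since $d''\leq\mathrm{wt}(C'\setminus C^{\perp})$, all the content is to replace the second term by $\lceil(q+1)d''/q\rceil$, a genuine improvement of factor $(q+1)/q$.

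My plan is to modify $S_0$ by a ``shear''. Fix $\lambda\in\mathbb{F}_q^{\ast}$ and set $S=\{(u+\lambda v,v):u\in C'^{\perp},\,v\in C^{\perp}\}$. A direct computation using $C'^{\perp}\cdot C=0$ and $C^{\perp}\cdot C^{\perp}=0$ shows $S$ is self-orthogonal of the same dimension as $S_0$, and $S^{\perp_s}=\{(c+\lambda y,y):c\in C,\,y\in C'\}$, with $S$ being the subset on which additionally $y\in C^{\perp}$ and $c\in C'^{\perp}$. Elements of $S^{\perp_s}\setminus S$ then split into two cases: (i) $y\in C^{\perp}$, forcing $c\in C\setminus C'^{\perp}$ and giving symplectic weight $\geq\mathrm{wt}(c)\geq d'$; and (ii) $y\in C'\setminus C^{\perp}$, where the factor $(q+1)/q$ must emerge. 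I expect to have to further augment $S$ with hybrid generators that absorb the pathological short vectors $(\lambda y,y)$ for $y\in C$ while dropping redundant generators to keep the dimension fixed; the self-orthogonality of such augmentations follows from $C\cdot C^{\perp}=0$, and the correct count matches $k'-k\geq2$.

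The heart of the proof and its main obstacle is the weight estimate in case (ii). Write $t=\mathrm{wt}(y)\geq d''$, $T=\mathrm{supp}(y)$, and $s=|\{i\notin T:(c+\lambda y)_i\neq 0\}|$, so that $\mathrm{wt}((c+\lambda y,y))=t+s$. I would bound $s$ by an averaging pigeonhole over $\mathbb{F}_q$: for each $\mu\in\mathbb{F}_q$ the vector $w_\mu=c+\mu\lambda y$ lies in $C'$, and a coordinate-wise count gives
\[
\sum_{\mu\in\mathbb{F}_q}\mathrm{wt}(w_\mu)=q\,s+(q-1)\,t,
\]
since coordinates outside $T$ contribute either $q$ or $0$ uniformly in $\mu$ whereas each coordinate inside $T$ vanishes for exactly one value of $\mu$. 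Ruling out $w_\mu\in C'^{\perp}$ for all $\mu$ (using $y\notin C^{\perp}$ together with the augmentation ensuring the degenerate cases lie in $S$), every $w_\mu$ has weight at least $d''$, so $q\,s+(q-1)\,t\geq q\,d''$; rearranging gives $t+s\geq t/q+d''\geq d''/q+d''=(q+1)d''/q$, hence $\mathrm{wt}((c+\lambda y,y))\geq\lceil(q+1)d''/q\rceil$. The delicate step is verifying uniformly in $c,y,\lambda$ that no $w_\mu$ falls into $C'^{\perp}$ once $S$ has been suitably augmented; this is where the combinatorial bookkeeping concentrates, and once it is settled the two cases combine to give the asserted minimum distance and prove the theorem.
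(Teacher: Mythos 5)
Your overall architecture is sound in two respects: the symplectic framework with the dimension count $2n-k-k'$, and the averaging identity $\sum_{\mu\in\mathbb{F}_q}\mathrm{wt}(w_\mu)=qs+(q-1)t$ leading to $t+s\ge d''+t/q\ge (q+1)d''/q$. The latter is exactly the engine behind the bound; it is the same inequality the paper invokes in its appendix, in the form $\mathrm{w}(\mathbf{u}|\mathbf{v})\ge\mathrm{wt}(\mathbf{v}-\lambda\mathbf{u})+\mathrm{wt}(\mathbf{u})/q$, when proving Theorem \ref{elnuevo}, of which the present statement is the special case $C_1=\hat{C}_1$. However, there is a genuine gap in your construction of $S$, and it is not a removable piece of bookkeeping: the scalar shear $S=\{(u+\lambda v,v)\}$ cannot work. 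The map $(a|b)\mapsto(a+\lambda b|b)$ is a symplectic isometry that preserves symplectic weight, because $\mathrm{supp}(a+\lambda b)\cup\mathrm{supp}(b)=\mathrm{supp}(a)\cup\mathrm{supp}(b)$; hence your $S$ is equivalent to the plain CSS space $S_0$ and buys nothing. Concretely, $(\lambda y\,|\,y)$ with $y\in C'\setminus C^{\perp}$ of weight $d''$ lies in $S^{\perp_s}\setminus S$ and has symplectic weight exactly $d''$. In your averaging step this shows up as the vector $w_0=c$, which may be $0$ or lie in $C'^{\perp}$, destroying the bound; and adjoining the vectors $(\lambda y|y)$ to $S$ cannot repair this while keeping the dimension and the quotient correct. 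A telling symptom is that your argument never uses the hypothesis $k'\ge k+2$, which any correct proof must.

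The missing idea is to replace the scalar $\lambda$ by a linear map with no eigenvalue in $\mathbb{F}_q$, acting only on a complement of $C$ inside $C'$. Pick $G$ generating $C$, $L$ generating a space $D$ with $C\oplus D=C'$, and a $(k'-k)\times(k'-k)$ matrix $A$ such that $A-\mu I$ is invertible for every $\mu\in\mathbb{F}_q$; such an $A$ exists precisely because $k'-k\ge 2$. Let $S$ be generated by $(G|0)$, $(0|G)$ and $(L|AL)$; then $S^{\perp_s}=C'^{\perp}\times C'^{\perp}\subseteq S$ and $\dim S=k+k'$. For an element $(\mathbf{u}|\mathbf{v})=(c_1+\mathbf{w}L\,|\,c_2+\mathbf{w}AL)$ with $\mathbf{w}\ne 0$, every combination $\mathbf{v}-\mu\mathbf{u}$ has nonzero $D$-component $\mathbf{w}(A-\mu I)L$, so it lies in $C'\setminus C\subseteq C'\setminus C'^{\perp}$ and has weight at least $d''$ for \emph{all} $\mu\in\mathbb{F}_q$ simultaneously, as does $\mathbf{u}$ itself; your averaging identity then yields $\lceil (q+1)d''/q\rceil$ with no exceptional case, while elements with $\mathbf{w}=0$ give the $d'$ term. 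This eigenvalue-free $A$ is exactly the ingredient used in the paper's appendix (the ``fixed point free'' matrix $A$ in the generator matrix with rows $(L\,|\,AL)$, $(G_1|0)$, $(0|\hat{G}_1)$), specialized to $C_1=\hat{C}_1$.
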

Using the above results, many quantum codes coming from classical codes have been constructed, see \cite{kkk} for example. A complete table of parameters corresponding to known binary quantum codes, up to length 128, can be consulted in \cite{codet}. There is no table for non-binary quantum codes, although there are some codes with good parameters, essentially concerning MDS quantum codes, quantum LDPC codes or quantum BCH codes  \cite{Sarvepalli, edel, Akk, lag3, lag1, jin, lag2, f2}. Most of the above mentioned codes have specific lengths depending on $q$. A way to get codes with good parameters uses evaluation (classical) codes \cite{geil,Geil-Affine,galmon,geil2,galmon2}. In \cite{galindo-hernando, gal-her-rua} the authors consider affine variety codes which form a class of evaluation codes such that duality can be characterized. The reader can consult \cite{Geil-Affine} for a lower bound for the minimum distance of these codes.

The above mentioned papers \cite{galindo-hernando, gal-her-rua} considered affine variety codes, where we could compare parameters of our codes with others given by BCH codes and improve some of them. The evaluation at zero was not  considered  and we only used duality with respect to the Euclidean inner product.  This paper is devoted to construct algebraically generated stabilizer codes from a more general version of affine variety codes ($J$-affine variety codes), where we decide which coordinates of the points to evaluate may be zero. In this way, we get a wider range of lengths for our codes. Moreover, in this work, both Euclidean and Hermitian duality are considered, which allows us to obtain a richer family of codes. Notice that our codes are based on cyclotomic sets and subfield-subcodes, so they can be seen as a natural extension of BCH codes since these can be constructed with cyclotomic cosets and are subfield-subcodes of Reed-Solomon codes.

Stabilizer codes derived from the Euclidean inner product and  $J$-affine variety codes (and their subfield-subcodes) are studied in Section 1 and their parameters are described in Theorem \ref{encero}. Section 2 develops the Hermitian case, the main result is Theorem \ref{esteo6}, which also gives the parameters of the corresponding codes. To prove this result we show that  Delsarte Theorem \cite{delsarte} also holds in our case, that is, with respect to Hermitian inner product. Furthermore, we prove in the appendix  the following generalization of the Steane's enlargement procedure.

\begin{teo}
\label{elnuevo}
Let $C_1$ and $\hat{C}_1$ be two linear codes over the field $\mathbb{F}_q$, with parameters $[n,k_1,d_1]$ and $[n,\hat{k}_1,\hat{d}_1]$ respectively, and such that $C_1^\perp \subseteq \hat{C}_1$. Consider a linear code $D \subseteq  \mathbb{F}_q^n$  such that $\dim D \ge 2$ and $(C_1 + \hat{C}_1) \cap D = \{ 0 \}$. Set $C_2 = C_1 + D$ and $\hat{C}_2 = \hat{C}_1 + D$, that enlarge $C_1$ and $\hat{C}_1$  respectively, with parameters  $[n,k_2,d_2]$ and $[n,\hat{k}_2,\hat{d}_2]$ ($k_2 - k_1 = \hat{k}_2 - \hat{k_1} = \dim D >1$). Set $C_3$ the code sum of the vector spaces $C_1 + \hat{C}_1 + D$, whose parameters we denote by $[n,k_3,d_3]$. Then, there exists a stabilizer code with parameters
\[
\left[\left[ n,  k_2 + \hat{k}_1-n, d \geq \min \left\{ d_1, \hat{d}_1, \left\lceil \frac{d_2 + \hat{d}_2 + d_3}{2} \right\rceil \right\} \right]\right]_2,
\]
when $q=2$. Otherwise, the parameters are $$ \left[\left[ n,  k_2 + \hat{k}_1-n, d \geq \min \left\{ d_1, \hat{d}_1, M \right\} \right]\right]_q,$$ where $M = \max \{ d_3 + \lceil (d_2/q) \rceil, d_3 + \lceil (\hat{d}_2/q) \rceil\}$.
\end{teo}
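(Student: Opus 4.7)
The plan is to combine the standard two-code CSS construction with a Steane-Hamada style weight argument.

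My first step is to derive the inclusion $\hat{C}_1^\perp \subseteq C_2$: dualizing the hypothesis $C_1^\perp \subseteq \hat{C}_1$ gives $\hat{C}_1^\perp \subseteq C_1 \subseteq C_2$. This is precisely the condition required to apply the two-code generalization of Theorem~\ref{bueno}(1) to the pair $(C_2, \hat{C}_1)$, producing a stabilizer code of length $n$, dimension $\dim C_2 + \dim \hat{C}_1 - n = k_2 + \hat{k}_1 - n$, and minimum distance at least
\[
\min\bigl\{\wt(C_2 \setminus \hat{C}_1^\perp),\, \wt(\hat{C}_1 \setminus C_2^\perp)\bigr\}.
\]
The rest of the argument is devoted to lower-bounding these two weights.

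The easier weight, $\wt(\hat{C}_1 \setminus C_2^\perp)$, is trivially at least $\hat{d}_1$ since every vector there is a nonzero codeword of $\hat{C}_1$; this accounts for the $\hat{d}_1$ summand in the stated minimum. For $\wt(C_2 \setminus \hat{C}_1^\perp)$, the hypothesis $(C_1+\hat{C}_1)\cap D = \{0\}$ forces $C_1 \cap D = \{0\}$, so $C_2 = C_1 \oplus D$ as an internal direct sum. Thus every $v \in C_2$ admits a unique decomposition $v = c + e$ with $c \in C_1$ and $e \in D$. The easy subcase $e = 0$ puts $v$ in $C_1 \setminus \hat{C}_1^\perp$, whence $\wt(v) \geq d_1$, producing the $d_1$ summand.

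The remaining and main case is $e \neq 0$, where the bound $M$ (or, for $q = 2$, the ceiling expression) must be extracted. Here I would perform a Hamada-style coordinatewise counting with three families of auxiliary codewords: (a) the multiplicative translates $\alpha v$, $\alpha \in \mathbb{F}_q^*$, all lie in $C_2$, each of weight at least $d_2$; (b) replacing $c$ by any $\hat{c} \in \hat{C}_1$ produces a companion $\hat{v} = \hat{c} + e \in \hat{C}_2$, whose nonzero multiples have weight at least $\hat{d}_2$; (c) for every $h \in \hat{C}_1$ the sum $v + h$ lies in $C_3 \setminus \{0\}$, since $v + h = 0$ would force $e = -(c+h) \in (C_1+\hat{C}_1)\cap D = \{0\}$, a contradiction, so each such $v+h$ has weight at least $d_3$. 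For $q > 2$, combining the $q-1$ translates from (a) with a single translate from (c) and counting nonzero entries coordinate by coordinate, in the style of the proof of Theorem~\ref{ham}, yields $\wt(v) \geq d_3 + \lceil d_2/q \rceil$; the symmetric argument with (b) in place of (a) gives $\wt(v) \geq d_3 + \lceil \hat{d}_2/q \rceil$, and taking whichever option is larger produces $M$. For $q=2$ only a single nonzero scalar is available, so only one pair of translates can be balanced, and the analogous counting delivers the joint bound $\lceil (d_2 + \hat{d}_2 + d_3)/2 \rceil$. The principal technical obstacle is precisely this coordinate-counting step: while it follows the template of Hamada's original argument, it must simultaneously balance the three codes $C_2$, $\hat{C}_2$, and $C_3$ rather than just the two in Theorem~\ref{ham}, and it is here that the quantitative improvement over the earlier enlargement originates.
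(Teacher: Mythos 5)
There is a genuine gap, and it is located at the very first step: the choice of the stabilizer code. You apply the plain two-code CSS construction to the pair $(C_2,\hat{C}_1)$. That construction does give a quantum code of dimension $k_2+\hat{k}_1-n$, but its minimum distance is controlled by the classical Hamming weights $\wt(C_2\setminus\hat{C}_1^\perp)$ and $\wt(\hat{C}_1\setminus C_2^\perp)$, and the first of these is in general only bounded below by $d_2$: a minimum-weight codeword of $C_2=C_1\oplus D$ will typically have a nonzero $D$-component, in which case it lies in $C_2\setminus\hat{C}_1^\perp$ and has weight exactly $d_2$, while the claimed bound $\lceil(d_2+\hat{d}_2+d_3)/2\rceil$ (or $M$) generally exceeds $d_2$. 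No counting argument can rescue this route, and your sketch shows where it breaks: the translates $\alpha v$ with $\alpha\in\F_q^*$ all have the same support as $v$, and the auxiliary vectors $\hat{c}+e\in\hat{C}_2$ and $v+h\in C_3$ are \emph{different} codewords whose weights impose no lower bound on $\wt(v)$ itself. There is no identity that converts the weights of several distinct classical codewords into a lower bound for the Hamming weight of one of them.

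The paper's proof (following Steane and Hamada) avoids this by building a non-CSS symplectic code $S\subseteq\F_q^{2n}$ generated by the rows of $(L\,|\,AL)$, $(G_1\,|\,0)$ and $(0\,|\,\hat{G}_1)$, where $A$ is a fixed-point-free matrix. A coset representative is then a \emph{coupled} pair $(\mathbf{u}|\mathbf{v})=(\mathbf{v}_1L+\mathbf{v}_2G_1\,|\,\mathbf{v}_1AL+\mathbf{v}_3\hat{G}_1)$, and when $\mathbf{v}_1\neq 0$ one has simultaneously $\mathbf{u}\in C_2\setminus\{0\}$, $\mathbf{v}\in\hat{C}_2\setminus\{0\}$, and $\mathbf{u}+\mathbf{v}$ (resp.\ $\mathbf{v}-\lambda\mathbf{u}$) a nonzero element of $C_3$. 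The quantity being bounded is the \emph{symplectic} weight $\mathrm{w}(\mathbf{u}|\mathbf{v})$, i.e., the size of the union of the two supports, for which the identity $\mathrm{w}(\mathbf{u}|\mathbf{v})=\bigl(\wt(\mathbf{u})+\wt(\mathbf{v})+\wt(\mathbf{u}+\mathbf{v})\bigr)/2$ over $\F_2$, and Hamada's inequality $\mathrm{w}(\mathbf{u}|\mathbf{v})\ge\wt(\mathbf{v}-\lambda\mathbf{u})+\wt(\mathbf{u})/q$ over general $\F_q$, yield exactly the stated bounds. This coupling of the $X$- and $Z$-parts through $A$ is the essential idea missing from your proposal; your correct observations that $\hat{C}_1^\perp\subseteq C_1\subseteq C_2$, that the dimension is $k_2+\hat{k}_1-n$, and that $(C_1+\hat{C}_1)\cap D=\{0\}$ forces the relevant vectors to be nonzero all reappear in the paper's argument, but they must be deployed inside the symplectic construction rather than the CSS one.
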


In Section \ref{ejemplos} (see Table \ref{ta:best}), we use Theorem \ref{elnuevo} to determine  stabilizer binary codes of length 127 which are records in \cite{codet}. Within the same section and with the help of the previously mentioned results, we provide tables with unknown stabilizer codes over different ground fields that exceed the Gilbert-Varshamov bounds \cite{eck, mat, feng}, \cite[Lemma 31]{kkk}. When comparisons are possible, our codes improve those available in the literature.

\section{Stabilizer $J$-affine variety codes: Euclidean inner product}
\label{laprimera}

In this section we introduce $J$-affine variety codes and characterize their duality with respect to the Euclidean inner product. Our results provide stabilizer quantum codes, derived from these codes and their subfield-subcodes, whose parameters are also described.
\subsection{Euclidean duality for $J$-affine variety codes}
\label{SS1}
Set $q=p^r$ a positive power of a prime number $p$ and consider the finite field $\mathbb{F}_q$. Next we are going to introduce a family of affine variety codes and study their dual codes.

Consider the ring of polynomials  $\mathbb{F}_{q}[X_1, X_2, \ldots, X_m]$ in $m$ variables over the field $\mathbb{F}_{q}$ and fix $m$ integers $N_j >1$ such that  $N_j -1$ divides $q-1$ for $1 \leq j \leq m$.
For a subset $J \subseteq \{1,2, \ldots, m\}$, set $I_J$ the ideal of the ring $\mathbb{F}_{q}[X_1, X_2, \ldots, X_m]$ generated by $X_j^{N_j}-X_j$ whenever $j\not \in J$ and by $X_j^{N_j -1}- 1$ otherwise, for $1 \leq j \leq m$. We denote by $R_J$ the quotient ring
\[
R_J := \mathbb{F}_{q}[X_1, X_2, \ldots, X_m] / I_J.
\]

Set $Z_J = Z(I_J) = \{P_1, P_2, \ldots, P_{n_J}\}$ the set of zeros over $\mathbb{F}_q$ of the defining ideal of $R_J$. Clearly, the points $P_i$, $1 \leq i \leq n_J$, can have $0$ as a coordinate for those indices $j$ which are not in $J$ but this is not the case for the remaining coordinates. Denote  $\mathrm{ev}_J: R_J \rightarrow \mathbb{F}_{q}^{n_J}$ the evaluation map defined as $\mathrm{ev}_J(f) = (f(P_1), f(P_2), \ldots, f(P_{n_J}) $, where $n_J = \prod_{j \notin J} N_j \prod_{j \in J} (N_j -1)$. Denote also $T_j = N_j -1$ except when $j \in J$, in this last case, $T_j = N_j -2$, consider the set
$$
\mathcal{H}_J := \{0,1,\ldots,T_1\}\times \{0,1,\ldots,T_2\} \times\cdots\times\{0,1,\ldots,T_m\}
$$
and a nonempty subset $\Delta \subseteq \mathcal{H}_J$.
Then, we define the  $J$-affine variety code given by $\Delta$, $E_\Delta^J$, as the vector subspace (over $\mathbb{F}_{q}$) of $\mathbb{F}_{q}^{n_J}$ generated by the evaluation by $\mathrm{ev}_J$ of the set of classes in $R_J$ corresponding to monomials $X^{\boldsymbol{a}} := X_1^{a_1} X_1^{a_2}\cdots X_{m}^{a_m}$
such that $\boldsymbol{a}=(a_1,a_2, \ldots, a_m) \in \Delta$. Stabilizer codes constructed from $\{1,2, \ldots, m\}$-affine variety codes were considered in  \cite{galindo-hernando, gal-her-rua}  because they allowed us to do comparisons with some quantum BCH codes. In this paper $\emptyset$-affine variety codes are named  evaluating at zero affine variety codes although in some papers they are simply called affine variety codes \cite{Geil-Affine}.  We will stand $\mathcal{H}$ for $\mathcal{H}_\emptyset$ and we will also write $ \mathcal{H'} := \mathcal{H}_{\{1,2, \ldots, m\}}$. Notice that considering different sets $J$ we get codes of different lengths $$(N_1 -1) (N_2 -1) \cdots (N_m-1)= n_{\{1,2, \ldots, m\}}\le n_J  \le n_\emptyset =N_1 N_2 \cdots N_m.$$

Generalized Reed-Muller codes are a well-known family of evaluating at zero affine variety codes. Indeed, they can be defined as $RM(r,m) := E_{\Delta^0_{(r,m)}}$, where $N_j=q$ for all $j$ and  $\Delta^0_{(r,m)}$ corresponds with the exponents of the monomials in the set $\{f \in R_\emptyset | \deg f \leq r\}$, $\deg f$ meaning the total degree of the unique  representative of $f$ of degree less than $q$ in each indeterminate.

The following result extends one given in \cite{maria-michael} for $N_j=q$, $1 \leq j \leq m$, and it
will be used  for describing dual codes of $J$-affine variety codes.

\begin{pro}
\label{dual}
Let $J \subseteq \{ 1 , 2, \ldots , m\}$, consider $\boldsymbol{a}, \boldsymbol{b} \in \mathcal{H}_J$ and let $X^{\boldsymbol{a}}$ and $X^{\boldsymbol{b}}$ be two monomials representing elements in $R_J$. Then, the Euclidean inner product $\mathrm{ev}_J ( X^{\boldsymbol{a}}) \cdot \mathrm{ev}_J (X^{\boldsymbol{b}})$ is not $0$ if,  and only if, the following two conditions happen.
\begin{itemize}
\item For every $j \in J$, it holds that $a_j + b_j \equiv  0 \mod (N_j -1)$, (i.e.,  $a_j = N_j -1 - b_j$ when $a_j  + b_j > 0$ or $a_j=b_j=0$).
\item For every $j \notin J$, it holds that \begin{itemize}
\item either $a_j  + b_j > 0$ and $a_j + b_j \equiv 0 \mod (N_j -1)$,  (i.e.,  $a_j = N_j -1 - b_j$  if $0 < a_j, b_j < N_j -1$ or $(a_j,b_j) \in \left\{(0,N_j -1), (N_j -1,0), (N_j -1,N_j -1)  \right\}$ otherwise,
\item or $a_j = b_j = 0$ and $p \not | ~ N_j$.
\end{itemize}
\end{itemize}
\end{pro}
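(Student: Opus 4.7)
The plan is to exploit the product structure of the zero set and reduce the inner product to a product of one-variable character sums. Write
\[
\mathrm{ev}_J(X^{\boldsymbol{a}}) \cdot \mathrm{ev}_J(X^{\boldsymbol{b}}) = \sum_{P \in Z_J} P^{\boldsymbol{a}+\boldsymbol{b}},
\]
and note that $Z_J = Z_1 \times \cdots \times Z_m$, where $Z_j = \{x \in \mathbb{F}_q : x^{N_j-1}=1\}$ for $j \in J$ and $Z_j = \{0\} \cup \{x \in \mathbb{F}_q : x^{N_j-1}=1\}$ for $j \notin J$. This product structure lets me factor the sum as
\[
\sum_{P \in Z_J} P^{\boldsymbol{a}+\boldsymbol{b}} = \prod_{j=1}^m S_j, \qquad S_j := \sum_{x \in Z_j} x^{a_j+b_j},
\]
so the inner product is nonzero iff each $S_j$ is nonzero, reducing the problem to computing each $S_j$ separately.

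Next, I would evaluate each $S_j$ using the standard identity that if $c$ is a primitive $(N_j-1)$-th root of unity in $\mathbb{F}_q$ (which exists because $N_j - 1 \mid q-1$, in particular $p \nmid N_j - 1$, so $N_j - 1$ is a unit in $\mathbb{F}_q$), then $\sum_{i=0}^{N_j-2} c^{ik}$ equals $N_j - 1$ when $(N_j - 1) \mid k$ and equals $0$ otherwise (the geometric series collapses since $c^{k(N_j-1)}=1$). For $j \in J$, this gives $S_j \neq 0$ iff $(N_j-1) \mid (a_j+b_j)$. For $j \notin J$, I would separate two subcases: if $a_j + b_j > 0$ then the summand at $x=0$ vanishes and the same divisibility criterion applies; if $a_j = b_j = 0$ then $S_j = |Z_j| = N_j$ in $\mathbb{F}_q$, which is nonzero precisely when $p \nmid N_j$.

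Finally, I would translate the divisibility condition $(N_j-1) \mid (a_j+b_j)$ into the explicit enumerations in the statement using the ranges $a_j, b_j \in \{0, \ldots, T_j\}$. For $j \in J$ we have $T_j = N_j-2$, so $a_j+b_j \in \{0, \ldots, 2N_j-4\}$, and the multiples of $N_j-1$ in this range are only $0$ and $N_j-1$, matching the first bullet. For $j \notin J$ we have $T_j = N_j-1$, so $a_j+b_j$ ranges up to $2N_j-2$, and the positive multiples of $N_j-1$ are $N_j-1$ and $2(N_j-1)$; the latter forces $a_j = b_j = N_j-1$, while the former splits into the three possibilities listed (both strictly between $0$ and $N_j-1$, or one of the two coordinates equal to $0$), matching the first sub-bullet of the second item; the case $a_j = b_j = 0$ yields the second sub-bullet.

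There is no real obstacle here: the argument is a clean application of the multiplicative character sum formula combined with bookkeeping on the admissible exponents. The only point that requires mild care is the $a_j = b_j = 0$ case for $j \notin J$, where one must remember that the sum becomes $|Z_j| = N_j$ and thus involves the characteristic $p$ of the ground field, producing the single arithmetic condition $p \nmid N_j$ that distinguishes $J$-affine variety codes with evaluation at zero from the purely multiplicative case.
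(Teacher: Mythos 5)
Your proof is correct and follows essentially the same route as the paper's: factor the inner product over the product structure of the zero set, evaluate each one-variable factor as a geometric series that is $N_j-1$ or $0$ according to whether $(N_j-1)\mid(a_j+b_j)$, and treat the extra zero coordinate for $j\notin J$ separately, yielding $N_j$ in the case $a_j=b_j=0$. The only difference is that you additionally spell out the bookkeeping translating the divisibility condition into the explicit enumerations of the statement, which the paper leaves implicit.
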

\begin{proof}

For $j=1, 2, \ldots, m$, pick an element $\xi_j \in \mathbb{F}_q$ with order $N_j-1$; the existence is guaranteed by the fact that $N_j -1$ divides $q-1$. Then $\langle \xi_j \rangle = \{ \xi_j^0 , \xi_j^1 , \ldots , \xi_j^{N_j-1} \} = Z(X_j^{N_j-1} -1)$ and  $\langle \xi_j \rangle \cup \{ 0\} = Z(X_j^{N_j} -X_j)$. By the distributive law, one has that
$$\mathrm{ev}_J ( X^{\boldsymbol{a}}) \cdot \mathrm{ev}_J (X^{\boldsymbol{b}}) =  \left( \prod_{j \in J} \sum_{\gamma_i \in \langle \xi_j \rangle } \gamma_i^{a_j + b_j} \right) \left(  \prod_{j \not\in J} \sum_{\gamma_i \in \langle \xi_j \rangle \cup \{ 0\}} \gamma_i^{a_j + b_j}\right ).$$ Therefore the previous product is different from zero if, and only if, every factor is different from zero.

Let us consider $j \in J$ and assume that $a_j  + b_j > 0$ and  $a_j = N_j -1 - b_j$, then $a_j + b_j = N_j -1$, and it happens that $\sum_{\gamma_i \in \langle \xi_j \rangle } \gamma_i^{a_j + b_j} \neq 0 $  because$$
\sum_{\gamma_i \in \langle \xi_j \rangle} \gamma_i^{a_j + b_j} = \sum_{\gamma_i \in \langle \xi_j \rangle} \gamma_i^0 = N_j-1 \neq 0 ~(\text{in}~\mathbb{F}_q).
$$
The same result holds for $a_j=b_j=0$. Note that $N_j -1 \neq 0$ in $\mathbb{F}_q$ since $p \not| ~  N_j -1$. Indeed if $p ~| ~ N_j -1$, then, as $N_j -1 | ~ p^r -1$, $p$ had to divide $p^r -1$ which is false. It remains to show what happens when $a_j + b_j \not \equiv 0 \mod (N_j -1)$. In this case $a_j + b_j = c \neq 0$ in  the ring of congruences modulo $N_j -1$, which we set $\mathbb{Z}_{N_j-1}$, and the following chain of equalities holds: $$\sum_{\gamma_i \in \langle \xi_j \rangle} \gamma_i^{a_j + b_j} = \sum_{i=0}^{N_j-2} (\xi_j^i)^{c} = \sum_{i=0}^{N_j-2} (\xi_j^c)^{i} = \frac{1-(\xi_j^c)^{N_j-1}}{1 - \xi_j^c} = 0,$$ which completes the proof for the case $j \in J$. Notice that $\xi_j^c \neq 1$ since $c \neq 0 $ in  $\mathbb{Z}_{N_j-1}$.

To finish, assume $j \not \in J$. We remark that $0^k = 0$ for $k \neq 0$ and $0^0 = 1$.  If $a_j + b_j >0$ then $$\sum_{\gamma_i \in \langle \xi_j \rangle \cup \{ 0\}} \gamma_i^{a_j + b_j}  = \sum_{\gamma_i \in \langle \xi_j \rangle } \gamma_i^{a_j + b_j}$$and the corresponding factor will be different from zero if and only if $a_j + b_j \equiv 0 \mod (N_j -1)$ (by the case $j \in J$). However, if $a_j = b_j =0$ then $$\sum_{\gamma_i \in \langle \xi_j \rangle \cup \{ 0\}} \gamma_i^{a_j + b_j} = 1+ \sum_{\gamma_i \in \langle \xi_j \rangle} \gamma_i^0 = N_j$$ that will be equal to zero if and only if $p ~| ~N_j$.
\end{proof}


The above result shows that each monomial $X^{\boldsymbol{a}} = X_1^{a_1} X_2^{a_2}\cdots X_{m}^{a_m}$, $\boldsymbol{a} \in \mathcal{H}$, admits $2^{\mathrm{card} (Q)}$ monomials $X^{\boldsymbol{b}}$ such that $\mathrm{ev}_J (X^{\boldsymbol{a}}) \cdot \mathrm{ev}_J 	 (X^{\boldsymbol{b}}) \neq 0$, where $$Q= \{j \; | \; 1 \leq j \leq m; a_j = N_j -1\}.$$

Now, for a set $J$ as above, consider a subset $\Delta$ of $\mathcal{H}_J$. If $\Delta \subseteq \mathcal{H'}$, we define $\Delta^\perp$ as the set
$$ \mathcal{H}_J \setminus \{ (N_1 -1 -a_1, N_2 -1 - a_2, \ldots, N_m- 1 -a_m) \; | \; \boldsymbol{a} \in \Delta\}.$$
Otherwise, i.e., in our monomials there is an exponent of some $X_j$ equal to $N_j-1$,  $\Delta^\perp$ is defined as
$$
\mathcal{H}_J\setminus \left\{ \{ (N_1 -1 -a_1, N_2 -1 -a_2, \ldots , N_m -1 -a_m ) | \boldsymbol{a} \in \Delta \cap \mathcal{H'} \} \cup \{ \boldsymbol{a}' | \boldsymbol{a} \in \Delta , \boldsymbol{a} \notin \mathcal{H'} \} \right\},$$ where  $a'_j = N_j -1 -a_j$ if $a_j \neq N_j -1$ and $a'_j$ equals either $N_j -1$ or $0$ otherwise. Notice that an element $\boldsymbol{a} \in \Delta $, $\boldsymbol{a} \not \in \mathcal{H'}$, determines several values $\boldsymbol{a}'$. \\

This definition allows us to state the following straightforward result:
\begin{pro}
\label{pro:str}
Consider a set $\Delta \subseteq \mathcal{H}_J$ as above.
\begin{enumerate}
\item If the following inclusion $\Delta \subseteq \mathcal{H'}$ happens, then the equality of codes $(E^J_\Delta)^\perp = E^J_{\Delta^\perp}$ holds, where $(E^J_\Delta)^\perp$ denotes the dual code of $E^J_\Delta$.
\item Otherwise, if $\Delta \not \subseteq \mathcal{H'}$, then $E^J_{\Delta^\perp} \subseteq (E^J_\Delta)^\perp$.
\end{enumerate}
\end{pro}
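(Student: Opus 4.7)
The proof is essentially combinatorial bookkeeping based on Proposition \ref{dual} together with a dimension count.

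My first step would be to establish the inclusion $E^J_{\Delta^\perp} \subseteq (E^J_\Delta)^\perp$, which is the whole content of (2) and half of (1). For any $\boldsymbol{c} \in \Delta^\perp$ and $\boldsymbol{a} \in \Delta$, the inner product $\mathrm{ev}_J(X^{\boldsymbol{c}}) \cdot \mathrm{ev}_J(X^{\boldsymbol{a}})$ factors coordinate by coordinate as in the proof of Proposition \ref{dual}, and that proposition lists exactly which vectors $\boldsymbol{b} \in \mathcal{H}_J$ can make every one-variable factor nonzero: in case (1) the unique partner $\psi(\boldsymbol{a}) := (N_1 - 1 - a_1, \dots, N_m - 1 - a_m)$, and in case (2) for $\boldsymbol{a} \notin \mathcal{H}'$ the $2^{|Q|}$ vectors obtained by independently choosing $a'_j \in \{0, N_j - 1\}$ at each $j \in Q := \{j : a_j = N_j - 1\}$ and setting $a'_j = N_j - 1 - a_j$ otherwise. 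The definition of $\Delta^\perp$ removes all such partners, so $\boldsymbol{c} \in \Delta^\perp$ forces at least one one-variable factor to vanish, giving the orthogonality.

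To upgrade the inclusion to equality in part (1), I would close the argument by a dimension count. Since the points of $Z_J$ are the $n_J$ simple zeros of $I_J$, the evaluation map $\mathrm{ev}_J : R_J \to \mathbb{F}_q^{n_J}$ is an $\mathbb{F}_q$-isomorphism, so $\{\mathrm{ev}_J(X^{\boldsymbol{a}})\}_{\boldsymbol{a} \in \mathcal{H}_J}$ is a basis of $\mathbb{F}_q^{n_J}$ and $\dim (E^J_\Delta)^\perp = n_J - |\Delta|$. Under the hypothesis $\Delta \subseteq \mathcal{H}'$ the partner map $\psi$ is injective with image in $\mathcal{H}_J$, so exactly $|\Delta|$ vectors are deleted from $\mathcal{H}_J$ to form $\Delta^\perp$, yielding $\dim E^J_{\Delta^\perp} = n_J - |\Delta|$. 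Matching dimensions together with the previously established inclusion force the equality of codes. In part (2) this last step fails: the total count of partners is $\sum_{\boldsymbol{a} \in \Delta} 2^{|Q(\boldsymbol{a})|}$, generally strictly larger than $|\Delta|$, so one retains only the inclusion.

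The main delicate point is the coordinate-wise case analysis from Proposition \ref{dual} needed to check that every admissible partner of every $\boldsymbol{a} \in \Delta$ is actually removed by the definition of $\Delta^\perp$: the asymmetry between indices $j \in J$ and $j \notin J$, together with the two-valued behaviour at coordinates where $a_j \in \{0, N_j - 1\}$, is what makes the bookkeeping error-prone. Once this verification is carried out, the dimension count is immediate and both statements follow.
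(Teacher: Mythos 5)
Your overall route is the one the paper (implicitly) takes: Proposition \ref{pro:str} is stated there without proof, as a direct consequence of Proposition \ref{dual} and the definition of $\Delta^\perp$, and your two ingredients --- coordinatewise orthogonality of $\mathrm{ev}_J(X^{\boldsymbol{c}})$ and $\mathrm{ev}_J(X^{\boldsymbol{a}})$ for $\boldsymbol{c}\in\Delta^\perp$, $\boldsymbol{a}\in\Delta$, plus the dimension count coming from the fact that $\mathrm{ev}_J$ is an isomorphism and the monomial evaluations form a basis of $\mathbb{F}_q^{n_J}$ --- are exactly what is needed to make that ``straightforward'' claim explicit. The dimension count for part (1), including the injectivity of the partner map on $\mathcal{H'}$, is correct and is a welcome addition in precision.

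There is, however, one concrete step that does not hold as stated: the assertion that each $\boldsymbol{a}\in\mathcal{H'}$ has the \emph{unique} partner $\psi(\boldsymbol{a})=(N_1-1-a_1,\dots,N_m-1-a_m)$. Proposition \ref{dual} contains a second branch for $j\notin J$, namely $a_j=b_j=0$ with $p\nmid N_j$. Hence if some $\boldsymbol{a}\in\Delta\subseteq\mathcal{H'}$ has $a_{j_0}=0$ at a coordinate $j_0\notin J$ with $p\nmid N_{j_0}$, then both $b_{j_0}=N_{j_0}-1$ and $b_{j_0}=0$ give a nonzero factor, so $\boldsymbol{a}$ has $2^{\#\{j\notin J\,:\,a_j=0,\ p\nmid N_j\}}$ partners rather than one, and $\Delta^\perp$ as defined removes only $\psi(\boldsymbol{a})$. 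In that situation even the inclusion $E^J_{\Delta^\perp}\subseteq (E^J_\Delta)^\perp$ fails: take $m=1$, $J=\emptyset$, $q=9$, $N_1=5$, $\Delta=\{0\}$; then $0\in\Delta^\perp=\{0,1,2,3\}$ while $\mathrm{ev}_\emptyset(X^0)\cdot\mathrm{ev}_\emptyset(X^0)=N_1=5\equiv 2\neq 0$ in $\mathbb{F}_9$. To be fair, this oversight is inherited from the paper itself, whose count of ``$2^{\mathrm{card}(Q)}$ partners'' ignores the same branch, and all of the paper's worked examples happen to satisfy $p\mid N_j$ for $j\notin J$; but a complete proof cannot simply assert uniqueness --- it must either impose that $p\mid N_j$ for every $j\notin J$ at which a zero exponent occurs in $\Delta$, or enlarge the set of removed partners accordingly. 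The rest of your bookkeeping (the $j\in J$ coordinates, the $0<a_j<N_j-1$ coordinates, and the $2^{|Q|}$ count at $a_j=N_j-1$) is accurate, and the dimension comparison correctly explains why only an inclusion survives in part (2).
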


\begin{rem}
{\rm
When considering $\{1,2, \ldots,m\}$-affine variety codes, the defining set $\Delta$ must satisfy $\Delta \subseteq \mathcal{H'}$. Thus the same reasoning as above shows that $(E^{\{1,2, \ldots,m\}}_\Delta)^\perp = E^{\{1,2, \ldots,m\}}_{\Delta^\perp}$. 
}
\end{rem}

\subsection{Subfield-subcodes of $J$-affine variety codes}
\label{SS2}
In this section we show some results concerning dimension and self-orthogonality with respect to Euclidean inner product of subfield-subcodes of $J$-affine variety codes.

Recall that $q=p^r$ and pick a positive integer $s$ such that $s$ divides $r$. With the above notations, consider the set $\mathcal{H}_J$ and recall that $N_j -1$ divides $q -1$ for all $j$ such that $1 \leq j \leq m$. Next we define three trace type maps which will be useful: $\mathrm{tr}_r^s: \mathbb{F}_{p^r} \rightarrow \mathbb{F}_{p^s}$ defined as $\mathrm{tr}_r^s(x)= x + x^{p^s} + \cdots + x^{p^{s(\frac{r}{s}-1)}}$; $\mathbf{tr}: \mathbb{F}_{p^{r}}^{n_J} \rightarrow \mathbb{F}_{p^{s}}^{n_J}$, determined by  $\mathrm{tr}_r^s$ componentwise and $\mathcal{T}: R_J \rightarrow R_J$, $\mathcal{T}(f) = f + f^{p^s} + \cdots + f^{p^{s(\frac{r}{s}-1)}}$.

For $1 \leq j \leq m$ consider the above defined integer numbers $T_j$ and, as before, denote by $\mathbb{Z}_{T_j}$ the quotient ring $\mathbb{Z} / T_j \mathbb{Z}$. In this section, we will consider {\it cyclotomic sets} that  is subsets $\mathfrak{I}$ of the cartesian product $\mathbb{Z}_{T_1}\times \mathbb{Z}_{T_2} \times \cdots\times\mathbb{Z}_{T_m}$ such that
$
\mathfrak{I}= \{p^s \cdot \boldsymbol{a} \;| \; \boldsymbol{a}\in \mathfrak{I}\}
$,
where $p^s \cdot \boldsymbol{a} = (p^s a_1, p^s a_2, \ldots, p^s a_m)$. A cyclotomic set $\mathfrak{I}$ is {\it minimal} (for the above given exponent $s$) whenever all the elements in $\mathfrak{I}$ can be expressed as  $p^{s i } \cdot \boldsymbol{a}$ for some fixed element $\boldsymbol{a} \in \mathfrak{I}$ and some nonnegative integer $i$. Consider a set  $\mathcal{A}$ representing the minimal cyclotomic sets, that is  pick $\mathbf{a} \in \mathfrak{I}$ for each minimal cyclotomic set in such a way that $ \mathfrak{I} = \mathfrak{I}_\mathbf{a}$ for some $\mathbf{a} \in \mathcal{A}$. Thus, the set of minimal cyclotomic sets will be $\{ \mathfrak{I}_\mathbf{a}\}_{\mathbf{a} \in \mathcal{A}}$. Moreover, set $i_\mathbf{a} : = \mathrm{card}(\mathfrak{I}_\mathbf{a})$.


The subfield-subcodes (over $\mathbb{F}_{p^s}$) of our $J$-affine variety codes $E^J_\Delta$  are defined as $E^{J,\sigma}_\Delta := E^J_\Delta \cap \mathbb{F}_{p^s}^{n_J}$. We write $C^J_\Delta$ (respectively,
$C^{J,\sigma}_\Delta$) the dual code of $E^{J}_\Delta$ (respectively, $E^{J,\sigma}_\Delta$). Moreover, an {\it element $f \in R_J$  evaluates to $\mathbb{F}_{p^s}$} whenever $f(\boldsymbol{a}) \in \mathbb{F}_{p^s}$ for all $\boldsymbol{a}  \in Z_J$. Notice that this happens if and only if $f = \mathcal{T}(g)$ for some $g \in R_J$. Now we are ready to state the following result that determines the dimension of the subfield-subcodes $E^{J,\sigma}_\Delta $. It can be proved reasoning as in \cite[Theorem 3]{galindo-hernando}.

\begin{teo}
\label{eselteotres}
Let $\beta_\mathbf{a}$ be a primitive element of the finite field $\mathbb{F}_{p^{si_\mathbf{a}}}$ and set $\mathcal{T}_\mathbf{a} : R_J \rightarrow R_J$ the mapping defined as $\mathcal{T}_\mathbf{a}(f) = f + f^{p^s} + \cdots + f^{p^{s(i_\mathbf{a} -1)}}$. Consider a set $\Delta \subseteq \mathcal{H}_J$. Then, the vector space $E^{J,\sigma}_\Delta$ is generated by the images under the evaluation map $\mathrm{ev}_J$ of the following elements in $R_J$:
$\bigcup_{\mathbf{a} \in \mathcal{A}| \mathfrak{I}_\mathbf{a}\subseteq \Delta}  \left\{ \mathcal{T}_{\mathbf{a}} (\beta_\mathbf{a}^{l} X^\mathbf{a}) \; | \; 0 \leq l \leq i_\mathbf{a}-1  \right\}$.
\end{teo}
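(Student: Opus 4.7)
The plan is to establish the equality of vector spaces by proving both inclusions, reasoning along the lines of \cite[Theorem 3]{galindo-hernando}, with the extra care required by the presence of the index set $J$ and the non-uniform bounds $T_j$.

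First I would check the easy inclusion. Each candidate generator $\mathcal{T}_{\mathbf{a}}(\beta_\mathbf{a}^{l} X^\mathbf{a})$ is a polynomial whose monomial support, after reduction modulo $I_J$, is contained in $\mathfrak{I}_\mathbf{a} \subseteq \Delta$, so its image under $\mathrm{ev}_J$ lies in $E_\Delta^J$. Moreover, any coordinate of this image has the shape $\mathrm{tr}_r^s(\beta_\mathbf{a}^{l} X^\mathbf{a}(P))$, hence lies in $\mathbb{F}_{p^s}$. Thus it belongs to $E_\Delta^{J,\sigma}$.

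For the converse, take $v \in E_\Delta^{J,\sigma} \subseteq E_\Delta^J$ and choose the unique representative $f = \sum_{\mathbf{b} \in \Delta} c_\mathbf{b} X^{\mathbf{b}} \in R_J$ with $\mathrm{ev}_J(f) = v$. Since $v$ has entries in $\mathbb{F}_{p^s}$, the element $f$ evaluates to $\mathbb{F}_{p^s}$, so by the characterization recalled before the theorem there exists $g \in R_J$ with $f = \mathcal{T}(g)$. Expand $g = \sum_{\mathbf{b}} d_\mathbf{b} X^{\mathbf{b}}$ in the monomial basis of $R_J$ and use that raising $X^{\mathbf{b}}$ to the power $p^s$ corresponds, modulo $I_J$, to the exponent action $\mathbf{b} \mapsto p^s \cdot \mathbf{b}$ on the set $\mathbb{Z}_{T_1}\times\cdots\times\mathbb{Z}_{T_m}$ used to define cyclotomic sets. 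Hence $\mathcal{T}(X^{\mathbf{b}})$ is a sum of monomials with exponents in the orbit $\mathfrak{I}_\mathbf{b}$. Collecting contributions orbit by orbit, one can group terms so that for each $\mathbf{a} \in \mathcal{A}$ the total contribution is of the form $\mathcal{T}_\mathbf{a}(\lambda_\mathbf{a} X^{\mathbf{a}})$ with $\lambda_\mathbf{a} \in \mathbb{F}_{p^{s i_\mathbf{a}}}$. Since $\{1,\beta_\mathbf{a},\ldots,\beta_\mathbf{a}^{i_\mathbf{a}-1}\}$ is an $\mathbb{F}_{p^s}$-basis of $\mathbb{F}_{p^{s i_\mathbf{a}}}$, writing $\lambda_\mathbf{a} = \sum_{l=0}^{i_\mathbf{a}-1} \mu_{\mathbf{a},l}\, \beta_\mathbf{a}^l$ with $\mu_{\mathbf{a},l} \in \mathbb{F}_{p^s}$ expresses $v$ as an $\mathbb{F}_{p^s}$-linear combination of the proposed generators.

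The main obstacle, and where the argument must be handled carefully, is the step showing that only the orbits $\mathfrak{I}_\mathbf{a}$ entirely contained in $\Delta$ contribute. This requires using that $f$ has its reduced support in $\Delta$ and that this support is automatically stable under the Frobenius action $\mathbf{b}\mapsto p^s\cdot\mathbf{b}$ (because $f$ is fixed, coordinatewise on $Z_J$, by raising to the $p^s$-th power). Consequently, whenever some coefficient $d_\mathbf{b}$ of $g$ feeds a monomial $X^{\mathbf{c}}$ into $f$ with $\mathbf{c} \in \mathfrak{I}_\mathbf{a}$, the whole orbit $\mathfrak{I}_\mathbf{a}$ must lie in $\Delta$; otherwise some monomial would appear in $f$ with a nonzero coefficient but exponent outside $\Delta$, contradicting uniqueness of the reduced form. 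One also needs to verify that the action $\mathbf{b}\mapsto p^s\cdot \mathbf{b}$ is well-defined on $\mathcal{H}_J$ modulo the appropriate $T_j$'s, which is where the hypothesis $N_j-1 \mid q-1$ and the distinction between $j \in J$ and $j\notin J$ play their role. Once these bookkeeping points are settled, the equality of vector spaces follows.
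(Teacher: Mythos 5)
Your argument is correct and takes exactly the route the paper intends: the paper offers no proof of this theorem beyond the remark that it ``can be proved reasoning as in \cite[Theorem 3]{galindo-hernando}'', and your sketch is precisely that standard argument --- decompose the reduced support into cyclotomic orbits, use that $f=\mathcal{T}(g)$ forces each orbit's contribution to be $\mathcal{T}_\mathbf{a}(\lambda_\mathbf{a}X^\mathbf{a})$ with $\lambda_\mathbf{a}\in\mathbb{F}_{p^{si_\mathbf{a}}}$, and expand $\lambda_\mathbf{a}$ in the basis $\{1,\beta_\mathbf{a},\ldots,\beta_\mathbf{a}^{i_\mathbf{a}-1}\}$ --- adapted to the $J$-affine setting. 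The only detail to tighten is that a coordinate of $\mathrm{ev}_J\bigl(\mathcal{T}_\mathbf{a}(\beta_\mathbf{a}^{l}X^\mathbf{a})\bigr)$ is the trace from $\mathbb{F}_{p^{si_\mathbf{a}}}$ to $\mathbb{F}_{p^s}$ rather than $\mathrm{tr}_r^s$, which still lies in $\mathbb{F}_{p^s}$ because $si_\mathbf{a}$ divides $r$.
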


Next, we provide a result concerning the dimension of the dual code $C^{J,\sigma}_\Delta$.

\begin{teo}
\label{eraeldoce}
Let $\Delta$ be a subset of $\mathcal{H}_J$. Consider the dual code $C_\Delta^{J,\sigma}$ of the subfield-subcode $E_\Delta^{J,\sigma}$. Then:
\begin{enumerate}
  \item The dimension of the code $C_\Delta^{J,\sigma}$ satisfies the inequality
  \[ \dim (C_\Delta^{J,\sigma}) \geq \sum_{\mathbf{a} \in \mathcal{A}| \mathfrak{I}_\mathbf{a} \cap \Delta^\perp \neq \emptyset} i_\mathbf{a}.
\]
  \item If  $\mathfrak{I}_\mathbf{a} \cap \Delta^\perp \neq \emptyset$ whenever $\mathfrak{I}_\mathbf{a} \subseteq \Delta$, then the inclusion $E_\Delta^{J,\sigma} \subseteq C_\Delta^{J,\sigma}$ holds.
  \item Assume that $\Delta$ is a subset of $\mathcal{H'}$. Then we get an equality in {\rm (1)} and the conditions given  in {\rm (2)} are equivalent.
\end{enumerate}
\end{teo}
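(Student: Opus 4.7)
My plan is to combine Delsarte's theorem with the Frobenius-orbit decomposition of $\mathbb{F}_q^{n_J}$ induced by the cyclotomic sets $\mathfrak{I}_\mathbf{a}$. Delsarte's theorem gives $C_\Delta^{J,\sigma}=\mathbf{tr}((E_\Delta^J)^\perp)$, and Proposition~\ref{pro:str} gives $(E_\Delta^J)^\perp\supseteq E_{\Delta^\perp}^J$, with equality when $\Delta\subseteq\mathcal{H'}$. Hence computing $\mathbf{tr}(E_{\Delta^\perp}^J)$ explicitly will yield both the lower bound in~(1) and, under the hypothesis of~(3), the exact value.

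For each $\mathbf{a}\in\mathcal{A}$ I set $V_\mathbf{a}$ to be the $\mathbb{F}_q$-span of $\{\mathrm{ev}_J(X^\mathbf{b}):\mathbf{b}\in\mathfrak{I}_\mathbf{a}\}$. Since $\{\mathrm{ev}_J(X^\mathbf{b})\}_{\mathbf{b}\in\mathcal{H}_J}$ is an $\mathbb{F}_q$-basis of $\mathbb{F}_q^{n_J}$, one gets an internal direct sum $\mathbb{F}_q^{n_J}=\bigoplus_\mathbf{a} V_\mathbf{a}$, each $V_\mathbf{a}$ stable under the componentwise Frobenius $x\mapsto x^{p^s}$ (which cycles its chosen basis). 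A direct Galois-fixed-point computation identifies $V_\mathbf{a}\cap\mathbb{F}_{p^s}^{n_J}$ with $\{\mathrm{ev}_J(\mathcal{T}_\mathbf{a}(cX^\mathbf{a})):c\in\mathbb{F}_{p^{si_\mathbf{a}}}\}$, of $\mathbb{F}_{p^s}$-dimension exactly $i_\mathbf{a}$. Consequently Theorem~\ref{eselteotres} can be recast as $E_\Delta^{J,\sigma}=\bigoplus_{\mathfrak{I}_\mathbf{a}\subseteq\Delta}(V_\mathbf{a}\cap\mathbb{F}_{p^s}^{n_J})$.

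The crux is the analogous formula $\mathbf{tr}(E_{\Delta^\perp}^J)=\bigoplus_{\mathfrak{I}_\mathbf{a}\cap\Delta^\perp\neq\emptyset}(V_\mathbf{a}\cap\mathbb{F}_{p^s}^{n_J})$. The inclusion $\subseteq$ is clear because $E_{\Delta^\perp}^J$ decomposes along the $V_\mathbf{a}$'s and $\mathbf{tr}$ respects this decomposition. For $\supseteq$, fix $\mathbf{a}$ with $\mathfrak{I}_\mathbf{a}\cap\Delta^\perp\neq\emptyset$, choose $\mathbf{b}=p^{sj}\mathbf{a}$ in that intersection, and expand $\mathbf{tr}(c\,\mathrm{ev}_J(X^\mathbf{b}))=\mathrm{ev}_J(\mathcal{T}(cX^\mathbf{b}))$ for $c\in\mathbb{F}_q$. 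Regrouping the $r/s$ Frobenius conjugates of $cX^\mathbf{b}$ into classes modulo $i_\mathbf{a}$ (using $i_\mathbf{a}\mid r/s$) rewrites this as $\mathrm{ev}_J(\mathcal{T}_\mathbf{a}(c'X^\mathbf{a}))$ with $c'=\mathrm{tr}_r^{si_\mathbf{a}}(c)^{p^{-sj}}\in\mathbb{F}_{p^{si_\mathbf{a}}}$; surjectivity of the intermediate trace $\mathbb{F}_{p^r}\to\mathbb{F}_{p^{si_\mathbf{a}}}$ then lets $c'$ exhaust $\mathbb{F}_{p^{si_\mathbf{a}}}$, sweeping out the whole Galois-fixed subspace $V_\mathbf{a}\cap\mathbb{F}_{p^s}^{n_J}$. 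This regrouping-and-trace-surjectivity identification is the main technical step; everything else is bookkeeping with the $V_\mathbf{a}$-decomposition.

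Part~(1) now follows by taking $\mathbb{F}_{p^s}$-dimensions in $C_\Delta^{J,\sigma}\supseteq\mathbf{tr}(E_{\Delta^\perp}^J)$. Part~(2) is immediate, since the hypothesis says $\{\mathbf{a}:\mathfrak{I}_\mathbf{a}\subseteq\Delta\}\subseteq\{\mathbf{a}:\mathfrak{I}_\mathbf{a}\cap\Delta^\perp\neq\emptyset\}$, which by the two direct-sum descriptions gives $E_\Delta^{J,\sigma}\subseteq\mathbf{tr}(E_{\Delta^\perp}^J)\subseteq C_\Delta^{J,\sigma}$. For~(3), Proposition~\ref{pro:str}(1) upgrades $C_\Delta^{J,\sigma}\supseteq\mathbf{tr}(E_{\Delta^\perp}^J)$ to an equality when $\Delta\subseteq\mathcal{H'}$, so (1) becomes an equality and the converse of the implication in~(2) follows by comparing which $V_\mathbf{a}$-summands occur on each side.
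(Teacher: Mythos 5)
Your argument is correct and follows essentially the same route as the paper's proof: Delsarte's theorem to identify $C_\Delta^{J,\sigma}$ with $\mathbf{tr}\left((E_\Delta^{J})^\perp\right)$, Proposition \ref{pro:str} to insert $\mathbf{tr}(E^J_{\Delta^\perp})$ inside it (with equality when $\Delta \subseteq \mathcal{H'}$), and a cyclotomic-orbit/trace computation in the spirit of Theorem \ref{eselteotres} to evaluate $\dim \mathbf{tr}(E^J_{\Delta^\perp})$ as $\sum_{\mathfrak{I}_\mathbf{a} \cap \Delta^\perp \neq \emptyset} i_\mathbf{a}$. The only difference is that you spell out explicitly, via the decomposition $\mathbb{F}_q^{n_J}=\bigoplus_\mathbf{a} V_\mathbf{a}$ and the regrouping of Frobenius conjugates, the step the paper delegates to the analogue of Theorem \ref{eselteotres}.
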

\begin{proof}
We keep the above notation and recall that $E^J_{\Delta^\perp} \subseteq (E^J_{\Delta})^\perp$. Moreover $(E_\Delta^{J,\sigma})^\perp = \mathbf{tr} (E_\Delta^{J})^\perp $ holds by Delsarte theorem \cite{delsarte}.  Therefore we get $ \mathbf{tr} (E^J_{\Delta^\perp}) \subseteq (E_\Delta^{J,\sigma})^\perp $. Set $\mathbb{F}_{p^r}^{\Delta^\perp}$ the vector space over the field $\mathbb{F}_{p^r}$ of polynomials generated by monomials with exponents in $\Delta^\perp$, which is generated by the set $\{ \mathcal{T} (\gamma X^{\mathbf{a}}\}_{\mathbf{a} \in \Delta^\perp, \gamma \in \mathbb{F}_{p^r}}$. Taking into account that $\mathrm{ev} \circ \mathcal{T} = \mathbf{tr} \circ \mathrm{ev}$,
we deduce that $\mathrm{ev} \left(\mathcal{T} (\mathbb{F}_{p^r}^{\Delta^\perp})\right) = \mathbf{tr} (E^J_{\Delta^\perp})$ which concludes the proof of items (1) and (2). Item (3) follows from the same reasoning and the equality $E^J_{\Delta^\perp} = (E^J_{\Delta})^\perp$.
\end{proof}


\subsection{Results on stabilizer codes}
The results and ideas in Subsections \ref{SS1} and \ref{SS2} together with Theorem \ref{bueno} prove the following result which, keeping the notations as above, states some results for stabilizer codes constructed with $J$-affine variety codes.

\begin{teo}
\label{encero}
Let $N_j$, $1 \leq j \leq m$, be positive integers such that $N_j -1 $ divides $q  -1$ for all index $j$. Let $\Delta$ be a subset of the above defined set $\mathcal{H}_J$. Then:
\begin{enumerate}
\item Assume the set inclusion $\Delta \subseteq \Delta^\perp$. Then, a stabilizer code coming $E_\Delta^J$ can be constructed. Its parameters are $[[n_J, k, \geq d]]_q$, where
    $n_J = \prod_{j \not \in J} N_j \prod_{j \in J} (N_j-1)$, $k= n_J - 2 \; \mathrm{card} (\Delta)$ and $d= d\left((E_\Delta^J)^\perp \right)$.
    \item  Consider $s$ a positive integer that divides $r$ and subfield-subcodes with respect to the field $\mathbb{F}_{p^s}$.  Assume that $\mathfrak{I}_\mathbf{a} \cap \Delta^\perp \neq \emptyset$ whenever $\mathfrak{I}_\mathbf{a} \subseteq \Delta$. Then, a stabilizer code coming from $E_\Delta^{J,\sigma}$ can be constructed. Its parameters are $[[n_J, \geq k, \geq d]]_{p^s}$, where $n_J$ is as above, $k= 2 \sum_{\mathbf{a} \in \mathcal{A}| \mathfrak{I}_\mathbf{a} \cap \Delta^\perp \neq \emptyset} i_\mathbf{a} - n_J$ and $d= d\left((E_\Delta^{J,\sigma})^\perp \right)$.
    \item Let $s$ and $\Delta$ be as in {\rm (2)}. Suppose also that $\Delta \subseteq \mathcal{H}'$. Then, the parameters of the corresponding stabilizer code are $[[n_J, k, \geq d]]_{p^s}$, where $n_J = \prod_{j \not \in J} N_j \prod_{j \in J} (N_j-1)$, $k = n_J -2 \; \sum_{\mathfrak{I}_\mathbf{a} |\mathfrak{I}_\mathbf{a} \subseteq \Delta} i_\mathbf{a}$ and $d= d\left((E^{J,\sigma}_\Delta)^\perp \right)$.
\end{enumerate}
\end{teo}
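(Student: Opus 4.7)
The plan is to apply Theorem \ref{bueno}(1) three times, each time producing a classical linear code $C$ that satisfies the Euclidean self-orthogonality hypothesis $C^\perp \subseteq C$. Once such a $C$ is exhibited, the CSS construction yields a stabilizer code of length $n_J$, dimension $2\dim C - n_J$ and minimum distance at least $d(C)$, so the whole argument reduces to (i) verifying self-orthogonality and (ii) identifying or lower-bounding $\dim C$. All the needed ingredients have already been set up in Subsections \ref{SS1} and \ref{SS2}.

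For (1) I take $C = (E_\Delta^J)^\perp$. The assumption $\Delta \subseteq \Delta^\perp$ together with Proposition \ref{pro:str} gives $E_\Delta^J \subseteq E_{\Delta^\perp}^J \subseteq (E_\Delta^J)^\perp$, which reads $C^\perp \subseteq C$. Because $I_J$ is radical and its zero set $Z_J$ has exactly $n_J$ points, the evaluation map $\mathrm{ev}_J$ is an $\mathbb{F}_q$-linear isomorphism $R_J \cong \mathbb{F}_q^{n_J}$, and the classes of the monomials $\{X^{\mathbf{a}}\}_{\mathbf{a}\in \mathcal{H}_J}$ form a basis of $R_J$; hence $\dim E_\Delta^J = \mathrm{card}(\Delta)$ and $\dim C = n_J - \mathrm{card}(\Delta)$. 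Theorem \ref{bueno}(1) then delivers the stated parameters $[[n_J,\, n_J - 2\,\mathrm{card}(\Delta),\, \geq d((E_\Delta^J)^\perp)]]_q$.

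For (2) and (3) I take instead $C = C_\Delta^{J,\sigma} = (E_\Delta^{J,\sigma})^\perp$ over $\mathbb{F}_{p^s}$. The hypothesis $\mathfrak{I}_{\mathbf{a}} \cap \Delta^\perp \neq \emptyset$ whenever $\mathfrak{I}_{\mathbf{a}} \subseteq \Delta$ is exactly what Theorem \ref{eraeldoce}(2) needs to give $E_\Delta^{J,\sigma} \subseteq C$, i.e.\ $C^\perp \subseteq C$. Theorem \ref{bueno}(1) combined with the dimension lower bound of Theorem \ref{eraeldoce}(1) then produces parameters $[[n_J,\, \geq 2\sum_{\mathfrak{I}_{\mathbf{a}}\cap\Delta^\perp\neq \emptyset} i_{\mathbf{a}} - n_J,\, \geq d((E_\Delta^{J,\sigma})^\perp)]]_{p^s}$, settling (2). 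For (3) the extra hypothesis $\Delta \subseteq \mathcal{H}'$ upgrades that lower bound to the equality of Theorem \ref{eraeldoce}(3); equivalently, by Proposition \ref{pro:str}(1) one has the exact duality $(E_\Delta^J)^\perp = E_{\Delta^\perp}^J$, which forces the trace generators listed in Theorem \ref{eselteotres} to be linearly independent and yields $\dim E_\Delta^{J,\sigma} = \sum_{\mathfrak{I}_{\mathbf{a}} \subseteq \Delta} i_{\mathbf{a}}$. Rewriting $\dim C = n_J - \dim E_\Delta^{J,\sigma}$ then gives the form $k = n_J - 2\sum_{\mathfrak{I}_{\mathbf{a}} \subseteq \Delta} i_{\mathbf{a}}$ claimed in (3).

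The argument is essentially a packaging of the earlier results. The one point that requires a little care is making sure the two dimension formulas implicit in (2) and (3) are used consistently; this comes down to the combinatorial identity $\sum_{\mathfrak{I}_{\mathbf{a}} \subseteq \Delta} i_{\mathbf{a}} + \sum_{\mathfrak{I}_{\mathbf{a}} \cap \Delta^\perp \neq \emptyset} i_{\mathbf{a}} = n_J$ under the assumption $\Delta \subseteq \mathcal{H}'$, which follows from the fact that $\Delta^\perp$ is the complement (in $\mathcal{H}_J$) of the image of $\Delta$ under the involution appearing in the definition of $\Delta^\perp$. No tool beyond Proposition \ref{pro:str}, Theorems \ref{eselteotres}, \ref{eraeldoce}, and \ref{bueno}(1) is required, and I do not expect any genuine obstacle in executing the plan.
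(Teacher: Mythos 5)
Your proposal is correct and is essentially the paper's own argument: the paper proves Theorem \ref{encero} precisely by packaging Proposition \ref{pro:str}, Theorems \ref{eselteotres} and \ref{eraeldoce}, and the CSS construction of Theorem \ref{bueno}(1), applied to $(E_\Delta^J)^\perp$ in case (1) and to $C_\Delta^{J,\sigma}$ in cases (2) and (3). The dimension bookkeeping you spell out (in particular $\dim E_\Delta^J=\mathrm{card}(\Delta)$ via the bijectivity of $\mathrm{ev}_J$, and the passage from the lower bound in (2) to the exact count in (3) when $\Delta\subseteq\mathcal{H}'$) is exactly what the paper leaves implicit.
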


Notice that the condition $\Delta \subseteq \Delta^\perp$ for sets $\Delta$ containing the element $\boldsymbol{0}$  can only  happen when $p \mid N_j$, for some $j \notin J$.

Later, in Section \ref{ejemplos}, we will provide some examples of quantum codes with good parameters.  Now, without any pretension on parameters and only for ease of reading, we give a simple example of stabilizer codes constructed with the tools of this section.
\begin{exa}
\label{ejem1}
{\rm With the above notation, consider $p=2, r=4, m=2, N_1=4, N_2=6$ and set $J=\{2\} \subseteq \{1,2\}$. It is clear that $R_J = \mathbb{F}_{q}[X_1, X_2] / \langle X_1^4 -X_1, X_2^5 - 1\rangle$, $T_1=3$, $T_2 = 4$, $\mathcal{H}_J =\{0,1,2,3\} \times \{0,1,2,3,4\}$ and $\mathcal{H'} =\{0,1,2\} \times \{0,1,2,3,4\}$.

If we consider the subset of $\mathcal{H}_J$
\[
\Delta := \{(0,1), (0,2),(0,3),(0,4),(1,2),(1,3),(2,0),(2,1),(2,4)\},
\]
then it is clear that $\Delta \subseteq \mathcal{H'}$ and by the the paragraph before Proposition \ref{pro:str}, $\Delta^\perp = \Delta \cup \{(0,0), (3,0)\}$. Then $\Delta \subseteq \Delta^\perp$, Items (1) in Proposition \ref{pro:str} and Theorem \ref{encero} and \cite{magma} determine a $[[20, 20 -2\cdot9,4]]_{16}$ code because the cardinality of $\Delta$ is nine.

With respect to subfield-subcodes, set $s=1$, then the minimal cyclotomic sets are: $\mathfrak{I}_{(0,0)} = \{(0,0)\}$,
$\mathfrak{I}_{(0,1)} = \{(0,1),(0,2),(0,3),(0,4)\}$,
$\mathfrak{I}_{(1,0)} = \{(1,0),(2,0)\}$,
$\mathfrak{I}_{(1,1)} = \{(1,1),(2,2),(1,4),(2,3)\}$,
$\mathfrak{I}_{(1,2)} = \{(1,2),(2,4),(1,3),(2,1)\}$,
$\mathfrak{I}_{(3,0)} = \{(3,0)\}$,
$\mathfrak{I}_{(3,1)} = \{(3,1),(3,2),(3,3),(3,4)\}$. Consider the set $\Delta_1 = \mathfrak{I}_{(0,1)} \cup \mathfrak{I}_{(1,2)}$, where $i_{(0,1)}=4, i_{(1,2)}=4$  and as, $(0,1) \in \mathfrak{I}_{(0,1)} \cap \Delta_1^\perp$ and $(1,2) \in \mathfrak{I}_{(1,2)} \cap \Delta_1^\perp$, the inclusion  $E_{\Delta_{1}}^{J,\sigma} \subseteq C_{\Delta_{1}}^{J,\sigma}$ holds by Item (2) of Theorem \ref{eraeldoce}. Finally, $\Delta_1 \subseteq \mathcal{H'}$ and Statement (3) in Theorem \ref{encero} shows that we can construct a $[[20, 20 -2\cdot(4+4),4]]_{2}$ stabilizer code.
}
\end{exa}

\section{Stabilizer $J$-affine variety codes: Hermitian  inner product}
We have just studied stabilizer codes determined  by $J$-affine variety codes which are self-orthogonal with respect to the Euclidean inner product. Next we describe what happens when one considers the Hermitian inner product.
\subsection{Hermitian duality for affine variety codes}
In this section our ring of polynomials  is $\mathbb{F}_{q^2}[X_1, X_2, \ldots, X_m]$ where, as above, $q= p^r$ and fix $m$ integers $N_j >1$, $1 \leq j \leq m$, such that each $N_j -1$ divides $q^2-1$. Following Section \ref{SS1}, we define the rings $R_J$ as quotients of the above ring. Now we state our first result. 

\begin{pro}
\label{dualH}
Let $J \subseteq \{ 1 , 2, \ldots , m\}$, consider $\boldsymbol{a}, \boldsymbol{b} \in \mathcal{H}_J$ and let $X^{\boldsymbol{a}}$ and $X^{\boldsymbol{b}}$ be two monomials representing elements in $R_J$. Then, the Hermitian inner product $\mathrm{ev}_J ( X^{\boldsymbol{a}}) \cdot_h \mathrm{ev}_J (X^{\boldsymbol{b}})$ is not $0$ if,  and only if, the following two conditions happen.
\begin{itemize}
\item For every $j \in J$, it holds that $q a_j + b_j \equiv 0 \mod (N_j -1)$, (i.e  $b_j = - q a_j + \lambda(N_j -1)$, for some $\lambda \ge 0$).
\item For every $j \notin J$, it holds that \begin{itemize}
\item either $a_j  + b_j > 0$ and $q a_j + b_j \equiv 0 \mod (N_j -1)$\\  (i.e.,  $b_j = - q a_j + \lambda(N_j -1)$, for some $\lambda > 0$, if $0 < a_j, b_j < N_j -1$, or $(a_j,b_j) \in \left\{(0,N_j -1), (N_j -1,0), (N_j -1,N_j -1)  \right\}$, otherwise);
\item or $a_j = b_j = 0$ and $p \not | ~ N_j$.
\end{itemize}
\end{itemize}
\end{pro}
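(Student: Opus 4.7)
The plan is to adapt the proof of Proposition \ref{dual} to the Hermitian setting; the only new ingredient is the Frobenius twist in the inner product. For $j = 1, 2, \ldots, m$, pick $\xi_j \in \mathbb{F}_{q^2}$ of multiplicative order $N_j - 1$; this is possible because $N_j - 1 \mid q^2 - 1$, and then $\langle \xi_j \rangle = Z(X_j^{N_j - 1} - 1)$ and $\langle \xi_j \rangle \cup \{0\} = Z(X_j^{N_j} - X_j)$ inside $\mathbb{F}_{q^2}$. Using $\mathbf{x} \cdot_h \mathbf{y} = \sum x_i y_i^q$ together with the Frobenius identity $(\gamma_1 \cdots \gamma_m)^q = \gamma_1^q \cdots \gamma_m^q$, distributivity yields
\[
\mathrm{ev}_J(X^{\boldsymbol{a}}) \cdot_h \mathrm{ev}_J(X^{\boldsymbol{b}}) = \left(\prod_{j \in J} \sum_{\gamma_j \in \langle \xi_j \rangle} \gamma_j^{a_j + q b_j}\right) \left(\prod_{j \notin J} \sum_{\gamma_j \in \langle \xi_j \rangle \cup \{0\}} \gamma_j^{a_j + q b_j}\right),
\]
so the product is nonzero if and only if every one-variable factor is nonzero.

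For each $j \in J$, the sum $S_j := \sum_{\gamma \in \langle \xi_j \rangle} \gamma^{a_j + q b_j}$ is treated exactly as in Proposition \ref{dual}. If $a_j + q b_j \equiv 0 \pmod{N_j - 1}$, every summand equals $1$ and $S_j = N_j - 1$, which is nonzero in $\mathbb{F}_{q^2}$ because $N_j - 1 \mid q^2 - 1 = p^{2r} - 1$ forces $\gcd(p, N_j - 1) = 1$; otherwise the geometric-series identity gives $S_j = (1 - \xi_j^{(N_j - 1)(a_j + q b_j)})/(1 - \xi_j^{a_j + q b_j}) = 0$. To match the statement, use that $N_j - 1 \mid q^2 - 1$ implies $q^2 \equiv 1 \pmod{N_j - 1}$, so multiplying $a_j + q b_j \equiv 0$ by $q$ gives the equivalent form $q a_j + b_j \equiv 0 \pmod{N_j - 1}$.

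For $j \notin J$ the extra summand $\gamma = 0$ contributes $0^{a_j + q b_j}$, which equals $1$ precisely when $a_j = b_j = 0$. Therefore, if $a_j + b_j > 0$, the $j$-th factor coincides with $S_j$ and the previous analysis applies; the three boundary pairs $(0, N_j - 1)$, $(N_j - 1, 0)$, $(N_j - 1, N_j - 1)$ satisfy the congruence directly, since each makes $a_j + q b_j$ a multiple of $N_j - 1$. If $a_j = b_j = 0$, the factor equals $1 + (N_j - 1) = N_j$, which is nonzero if and only if $p \nmid N_j$.

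There is no essential obstacle beyond Proposition \ref{dual}; the only delicate bookkeeping point is recognizing that the stated congruence $q a_j + b_j \equiv 0 \pmod{N_j - 1}$ and the congruence $a_j + q b_j \equiv 0$ that emerges naturally from the Hermitian product are equivalent via $q^2 \equiv 1 \pmod{N_j - 1}$. Once this equivalence is in hand, all remaining computations are symbol-for-symbol translations of the Euclidean argument.
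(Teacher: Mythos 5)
Your proof is correct and follows essentially the same route as the paper: both reduce the Hermitian product to the Euclidean computation via the exponent twist $\boldsymbol{a}+q\boldsymbol{b}$ and then use $q^2\equiv 1 \pmod{N_j-1}$ to pass between the congruences $a_j+qb_j\equiv 0$ and $qa_j+b_j\equiv 0$. The paper simply cites Proposition \ref{dual} applied to $X^{q\cdot\boldsymbol{b}}$ rather than rewriting the factor-by-factor argument, but the content is identical.
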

\begin{proof}
It follows from Proposition \ref{dual} after taking into account that
$\mathrm{ev}_J ( X^{\boldsymbol{a}}) \cdot_h \mathrm{ev}_J (X^{\boldsymbol{b}})= \mathrm{ev}_J ( X^{\boldsymbol{a}}) \cdot \mathrm{ev}_J (X^{q \cdot \boldsymbol{b}})$, and  $a_j + q b_j \equiv 0 \mod (N_j -1)$ if, and only if, $b_j \equiv -q a_j \mod (N_j -1)$. Notice that this last equivalence happens because $N_j-1$ divides $q^2-1$ and thus $q$ is the inverse of $q$ modulo $N_j -1$.
\end{proof}


Consider a set $\Delta \subseteq \mathcal{H}_J$ and an element $\mathbf{a}$ in $\Delta$ as in Section \ref{laprimera}. Recall that, now, our field is $\mathbb{F}_{q^2}$. Suppose that $\Delta \subseteq \mathcal{H'}$, and, for each $j$, set $[-qa_j]_{N_j -1}$ a suitable representant of the congruence class modulo $N_j -1$ given by $-q a_j$. Then, we define $\Delta^{\perp_h}$ as the set
$$ \mathcal{H}_J \setminus \{ ([-qa_1]_{N_1 -1}, [-qa_2]_{N_2 -1}, \ldots, [-qa_m]_{N_m- 1})\; | \; \boldsymbol{a} \in \Delta\}.$$
Otherwise,  $\Delta^{\perp_h}$ is defined as
$$
\mathcal{H}_J \setminus \left\{ \{ ([-qa_1]_{N_1 -1}, [-qa_2]_{N_2 -1}, \ldots, [-qa_m]_{N_m- 1}) | \boldsymbol{a} \in \Delta \cap \mathcal{H'} \} \cup \{ \boldsymbol{a}' | \boldsymbol{a} \in \Delta , \boldsymbol{a} \notin \mathcal{H'} \} \right\},$$
where $\boldsymbol{a}'$ is a multi-valued vector defined by $a'_j = [-qa_j]_{N_j -1}$ if $a_j \notin \{ 0, N_j -1$\}, $a'_j$ is equal to $N_j -1$ if $a_j=0$ and $a'_j$ admits two values which are $N_j -1$ and $0$ if $a_j = N_j-1$.

Next we give a result about Hermitian duality of our codes which can be deduced from Proposition \ref{dualH}.

\begin{pro}
\label{laprocinco}
Let $\Delta \subseteq \mathcal{H}_J$ be as above.
\begin{enumerate}
\item Assume $\Delta \subseteq \mathcal{H'}$. Then the  equality of codes $(E^J_\Delta)^{\perp_h} = E^J_{\Delta^{\perp_h}}$ holds.
\item Otherwise, $\Delta \not \subseteq \mathcal{H'}$, it happens that $E^J_{\Delta^{\perp_h}} \subseteq (E^J_\Delta)^{\perp_h}$.
\end{enumerate}
\end{pro}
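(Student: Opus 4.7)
The strategy is to mimic the proof of Proposition \ref{pro:str}, now invoking Proposition \ref{dualH} in place of Proposition \ref{dual}. First I would observe that $\{\mathrm{ev}_J(X^{\boldsymbol{b}})\}_{\boldsymbol{b} \in \mathcal{H}_J}$ is a basis of $\mathbb{F}_{q^2}^{n_J}$ (since $\mathrm{ev}_J: R_J \to \mathbb{F}_{q^2}^{n_J}$ is an isomorphism and the $X^{\boldsymbol{b}}$'s form a monomial basis of $R_J$), so both $E^J_\Delta$ and $E^J_{\Delta^{\perp_h}}$ are spanned by monomial evaluations indexed by subsets of $\mathcal{H}_J$. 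Hence for either (1) or (2), the inclusion $E^J_{\Delta^{\perp_h}} \subseteq (E^J_\Delta)^{\perp_h}$ reduces to the pointwise vanishing $\mathrm{ev}_J(X^{\boldsymbol{a}}) \cdot_h \mathrm{ev}_J(X^{\boldsymbol{b}}) = 0$ for each $\boldsymbol{a} \in \Delta$ and each $\boldsymbol{b} \in \Delta^{\perp_h}$.

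The key step is then to apply Proposition \ref{dualH} to characterize when such a Hermitian product is nonzero. I would fix $\boldsymbol{a} \in \Delta$ and describe explicitly the set of $\boldsymbol{b} \in \mathcal{H}_J$ for which the product does not vanish: for $j \in J$ the component $b_j$ is forced to be $[-qa_j]_{N_j-1}$, and for $j \notin J$ the admissible values of $b_j$ are dictated by the case analysis, including the exceptional configurations $(0, N_j-1)$, $(N_j-1, 0)$, $(N_j-1, N_j-1)$ and $(0, 0)$ (this last one only when $p \nmid N_j$). Verifying that this set coincides precisely with the family of vectors removed from $\mathcal{H}_J$ in the definition of $\Delta^{\perp_h}$ yields the desired inclusion: a single removed vector $([-qa_j]_{N_j-1})_j$ per $\boldsymbol{a} \in \Delta \cap \mathcal{H}'$, and several multi-valued vectors $\boldsymbol{a}'$ for each $\boldsymbol{a} \in \Delta \setminus \mathcal{H}'$. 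Thus every $\boldsymbol{b} \in \Delta^{\perp_h}$ is Hermitian-orthogonal to every generator of $E^J_\Delta$.

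For the equality in (1) I would close by a dimension count. Since $N_j - 1$ divides $q^2 - 1$ and $\gcd(q, q^2-1) = 1$, multiplication by $-q$ is a bijection of $\mathbb{Z}/(N_j - 1)\mathbb{Z}$, so the map $\boldsymbol{a} \mapsto ([-qa_j]_{N_j-1})_j$ is injective on $\mathcal{H}' \supseteq \Delta$, giving $|\Delta^{\perp_h}| = n_J - |\Delta|$. Combined with $\dim E^J_\Delta = |\Delta|$ (by linear independence of the monomial evaluations) and $\dim E^J_{\Delta^{\perp_h}} = |\Delta^{\perp_h}|$, the one-sided inclusion is forced to be an equality. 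In case (2) the same map acquires multiple images at each $\boldsymbol{a}$ with some $a_j = N_j - 1$, $j \notin J$, so $|\Delta^{\perp_h}|$ drops strictly and only the inclusion survives.

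The main obstacle is the bookkeeping for the degenerate coordinates $j \notin J$ with $a_j \in \{0, N_j - 1\}$, where Proposition \ref{dualH} splits into several subcases and the \emph{suitable representant} convention must be tuned to match each subcase so that the removed set in $\mathcal{H}_J$ really exhausts every $\boldsymbol{b}$ that could produce a nonzero Hermitian product with some $\mathrm{ev}_J(X^{\boldsymbol{a}})$, $\boldsymbol{a} \in \Delta$. Making this case analysis transparent—so that the definition of $\Delta^{\perp_h}$ is seen to be precisely the complement of the orthogonality pattern encoded by Proposition \ref{dualH}—is what turns the otherwise routine dimension argument into a complete proof.
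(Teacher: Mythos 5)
Your strategy coincides with the paper's: Proposition \ref{laprocinco} is given there with no argument beyond the remark that it ``can be deduced from Proposition \ref{dualH}'', and your elaboration --- the monomial evaluations $\mathrm{ev}_J(X^{\boldsymbol{b}})$, $\boldsymbol{b}\in\mathcal{H}_J$, form a basis, the inclusion reduces to pairwise Hermitian orthogonality of generators read off from Proposition \ref{dualH}, and the equality in (1) follows from injectivity of $\boldsymbol{a}\mapsto([-qa_j]_{N_j-1})_j$ on $\mathcal{H}'$ plus a dimension count --- is exactly the intended argument, and those parts are sound.

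The problem is the ``bookkeeping'' you defer in your final paragraph: in one subcase it is not a matter of tuning the representant convention, because no single representant suffices. For $j\notin J$ with $a_j=0$ and $p\nmid N_j$, Proposition \ref{dualH} gives \emph{two} admissible values $b_j\in\{0,\,N_j-1\}$ (the corresponding factor equals $N_j-1\neq 0$ for $b_j=N_j-1$ and $N_j\neq 0$ for $b_j=0$), whereas the definition of $\Delta^{\perp_h}$ removes only one point of $\mathcal{H}_J$ per $\boldsymbol{a}\in\Delta\cap\mathcal{H}'$. Concretely, take $m=1$, $J=\emptyset$, $p=q=3$, $N_1=5$ and $\Delta=\{0\}\subseteq\mathcal{H}'$: for either choice of the representant $[0]_{4}$, one of the two exponents $0$ or $4$ survives in $\Delta^{\perp_h}$, and both $\mathrm{ev}_\emptyset(X^{0})\cdot_h\mathrm{ev}_\emptyset(X^{0})=5\neq0$ and $\mathrm{ev}_\emptyset(X^{0})\cdot_h\mathrm{ev}_\emptyset(X^{4})=4\neq0$ in $\mathbb{F}_9$, so even the inclusion asserted in (1) fails. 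This defect is inherited from the statement as the paper formulates it (all of the paper's examples satisfy $p\mid N_j$ for every $j\notin J$, which kills the $b_j=0$ alternative); under that hypothesis, or under the convention that both $0$ and $N_j-1$ are removed whenever $a_j=0$ and $p\nmid N_j$, your argument closes. As written, however, your claim that the removed set ``really exhausts every $\boldsymbol{b}$ that could produce a nonzero Hermitian product'' is the one step that does not go through, so you should either add the missing hypothesis or amend the definition of $\Delta^{\perp_h}$ before asserting it.
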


A necessary condition for the inclusion of a generalized Reed-Muller code over $\mathbb{F}_{q^2}$ into its Hermitian dual is given in \cite{Sarvepalli}. We conclude this section with the following result which proves that such a condition is also sufficient.

\begin{pro}
Set $RM_{q^2}(r,m)$ the $(r,m)$-generalized Reed-Muller code over the finite field $\mathbb{F}_{q^2}$. Then, the codes' inclusion $RM_{q^2}(r,m) \subseteq \left( RM_{q^2}(r,m) \right)^{\perp_h}$ holds if, and only if, $0 \leq r \leq m(q-1) -1$.
\end{pro}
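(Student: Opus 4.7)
The plan is to apply Proposition \ref{dualH}, specialized to $J = \emptyset$ and $N_j = q^2$ for every $j$, to reduce the inclusion $RM_{q^2}(r,m) \subseteq (RM_{q^2}(r,m))^{\perp_h}$ to a combinatorial statement about exponent vectors. Writing $|\boldsymbol{a}| = a_1 + \cdots + a_m$, the code is spanned by the monomial evaluations $\mathrm{ev}(X^{\boldsymbol{a}})$ with $\boldsymbol{a} \in \mathcal{H}$ and $|\boldsymbol{a}| \le r$, so by bilinearity Hermitian self-orthogonality is equivalent to the vanishing of $\mathrm{ev}(X^{\boldsymbol{a}}) \cdot_h \mathrm{ev}(X^{\boldsymbol{b}})$ for every such pair. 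Since $p \mid N_j = q^2$, the clause ``$a_j = b_j = 0$ and $p \nmid N_j$'' in Proposition \ref{dualH} is vacuous, so the inner product is nonzero if and only if the coordinate-wise condition holds: for every $j$, $a_j + b_j > 0$ and $q a_j + b_j \equiv 0 \pmod{q^2 - 1}$.

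The core of the argument is the following arithmetic bound. For any integers $0 \le a, b \le q^2 - 1$ with $a + b > 0$ and $q a + b \equiv 0 \pmod{q^2 - 1}$, one has $a + b \ge 2(q - 1)$, with equality if and only if $a = b = q - 1$. My proof would multiply the congruence by $q$, using $q \cdot q \equiv 1 \pmod{q^2 - 1}$, to obtain $a + q b \equiv 0 \pmod{q^2 - 1}$; summing the two congruences gives $(q+1)(a+b) \equiv 0 \pmod{q^2 - 1}$, so $(q-1)$ divides $a+b$. The borderline case $a + b = q - 1$ must then be excluded: substituting $b = q - 1 - a$ yields $q a + b = (q-1)(a+1)$, which is a positive multiple of $(q-1)(q+1)$ only when $a + 1 \ge q + 1$, contradicting $a \le q - 1$.

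With the coordinate bound in hand, the biconditional follows quickly. For the ``only if'' direction, if $r \ge m(q-1)$ the choice $\boldsymbol{a} = \boldsymbol{b} = (q-1, \ldots, q-1)$ has $|\boldsymbol{a}| = |\boldsymbol{b}| = m(q-1) \le r$ and satisfies $q a_j + b_j = q^2 - 1$ in every coordinate, so the Hermitian inner product is nonzero and $RM_{q^2}(r,m) \not\subseteq (RM_{q^2}(r,m))^{\perp_h}$. For the ``if'' direction, suppose $r \le m(q-1) - 1$ and $|\boldsymbol{a}|, |\boldsymbol{b}| \le r$ were such that the coordinate condition held at every $j$; summing the coordinate bound over $j$ would force $|\boldsymbol{a}| + |\boldsymbol{b}| \ge 2m(q-1)$, contradicting $|\boldsymbol{a}| + |\boldsymbol{b}| \le 2r \le 2m(q-1) - 2$. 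Hence the condition fails somewhere and the inner product vanishes.

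The only non-routine step is the arithmetic bound of the second paragraph, where the divisibility argument alone gives only $a + b \ge q - 1$ and one must work slightly harder to rule out the borderline case and sharpen the estimate to $2(q-1)$; the rest of the proof is direct bookkeeping via Proposition \ref{dualH}.
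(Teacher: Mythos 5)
Your proof is correct, and for half of the statement it genuinely does more than the paper does. The paper's written proof establishes only one direction --- that $r \ge m(q-1)$ forces $RM_{q^2}(r,m) \not\subseteq \left(RM_{q^2}(r,m)\right)^{\perp_h}$ --- and it does so with exactly your witness: the monomial with exponent $(q-1,\ldots,q-1)$ and the identity $q(q-1)+(q-1)=q^2-1$ fed into Proposition \ref{dualH}. The converse implication is delegated wholesale to the cited work of Sarvepalli and Klappenecker. You instead prove that direction from scratch: you reduce to monomial pairs via Proposition \ref{dualH} (correctly observing that $p \mid N_j = q^2$ kills the ``$a_j=b_j=0$'' alternative), and you establish the coordinate-wise bound $a_j+b_j \ge 2(q-1)$ by combining the congruence $qa+b\equiv 0$ with its $q$-multiple to get $(q-1)\mid(a+b)$, then excluding the borderline value $a+b=q-1$ via $qa+b=(q-1)(a+1)$ with $1\le a+1\le q<q+1$. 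This arithmetic is sound (for $q=3$, modulo $8$, the minimum of $a+b$ over admissible pairs is indeed $4$, attained only at $(2,2)$), and summing over coordinates against $|\boldsymbol{a}|+|\boldsymbol{b}|\le 2r\le 2m(q-1)-2$ closes the argument. What your route buys is a self-contained proof with no appeal to the external reference; what the paper's buys is brevity. One cosmetic remark: the ``equality iff $a=b=q-1$'' clause of your arithmetic lemma, while true, is not needed --- the lower bound alone gives sufficiency, and the explicit witness alone gives necessity.
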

\begin{proof}
By \cite{Sarvepalli}, it suffices to prove that $r > m(q-1) -1$ implies $$RM_{q^2}(r,m) \not \subseteq \left( RM_{q^2}(r,m) \right)^{\perp_h}.$$ Indeed, consider the $m$-tuple $\mathbf{q-1} = (q-1,q-1, \ldots, q-1)$  that provides the monomial $X^{\mathbf{q-1}}$. Clearly $\mathrm{ev}_\emptyset (X^{\mathbf{q-1}}) \in RM_{q^2}(r,m)$, however $\mathrm{ev}_\emptyset (X^{\mathbf{q-1}}) \not \in \left(RM_{q^2}(r,m)\right)^{\perp_h}$ because $q-1 + q(q-1) = q^2 -1$ which, by Proposition \ref{dualH}, proves that $\mathrm{ev}_\emptyset (X^{\mathbf{q-1}}) \cdot_h \mathrm{ev}_\emptyset (X^{\mathbf{q-1}}) \neq 0$. This concludes the proof.
\end{proof}

\subsection{Results on stabilizer codes using Hermitian inner product}
In this section we prove that, considering duality with respect to the inner Hermitian product, an analogous result to Theorem \ref{encero}   holds. As above $q=p^r$ and $s$ is a positive integer that divides $r$. The ground field of our evaluation codes is $\mathbb{F}_{q^2}$ and  we consider subfield-subcodes over $\mathbb{F}_{p^{2s}}$. Recall that $N_j -1$ divides $q^2 -1$ for all $j$. The trace maps  are defined as: $\mathrm{tr}_{2r}^{2s}: \mathbb{F}_{p^{2r}} \rightarrow \mathbb{F}_{p^{2s}}$,  $\mathrm{tr}_{2r}^{2s}(x)= x + x^{p^{2s}} + \cdots + x^{p^{2s(\frac{r}{s}-1)}}$; $\mathbf{tr}: \mathbb{F}_{p^{2r}}^{n_J} \rightarrow \mathbb{F}_{p^{2s}}^{n_J}$, determined by  $\mathrm{tr}_{2r}^{2s}$ componentwise and $\mathcal{T}: R_J \rightarrow R_J$, $\mathcal{T}(f) = f + f^{p^{2s}} + \cdots + f^{p^{2s(\frac{r}{s}-1)}}$.

Let us state our before mentioned result for codes constructed using the Hermitian inner product.

\begin{teo}
\label{esteo6}
Let $N_j$, $1 \leq j \leq m$, be positive integers such that $N_j -1 $ divides $p^{2r} -1$ for all index $j$. Let $\Delta$ be a subset of the above defined set $\mathcal{H}_J$.
\begin{enumerate}
\item Assume the set inclusion $\Delta \subseteq \Delta^{\perp_h}$. Then, a stabilizer code coming from $E_\Delta^J$ can be constructed and their parameters can be obtained with the same formulae given in Item (1) of Theorem \ref{encero} but replacing $\perp$ with $\perp_h$.
    \item  Consider a positive integer $s$  dividing $r$ and subfield-subcodes with respect to the field $\mathbb{F}_{p^{2s}}$.  Assume that $\mathfrak{I}_\mathbf{a} \cap \Delta^{\perp_h} \neq \emptyset$ whenever $\mathfrak{I}_\mathbf{a} \subseteq \Delta$. Then, a stabilizer code coming from $E_\Delta^{J,\sigma}$ can be constructed. Formulae in Item (2) (respectively (3), whenever $\Delta \subseteq \mathcal{H'}$) of Theorem \ref{encero}, replacing $\perp$ with $\perp_h$, determine the parameters of these codes.
\end{enumerate}
\end{teo}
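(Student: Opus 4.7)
The plan is to mirror the proof of Theorem \ref{encero} step by step, substituting the Hermitian analog of each auxiliary result.

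For Item (1), I first note that $\dim E_\Delta^J = \mathrm{card}(\Delta)$, since the evaluations of distinct monomials with exponents in $\mathcal{H}_J$ are linearly independent. The hypothesis $\Delta \subseteq \Delta^{\perp_h}$ combined with Proposition \ref{laprocinco} produces
\[
E_\Delta^J \subseteq E_{\Delta^{\perp_h}}^J \subseteq (E_\Delta^J)^{\perp_h},
\]
so the code $C := (E_\Delta^J)^{\perp_h}$, of dimension $n_J - \mathrm{card}(\Delta)$, satisfies $C^{\perp_h} = E_\Delta^J \subseteq C$. Applying Item (2) of Theorem \ref{bueno} to $C$ yields a stabilizer code with parameters $[[n_J,\, n_J - 2\,\mathrm{card}(\Delta),\, \geq d((E_\Delta^J)^{\perp_h})]]_q$, which is the Hermitian version of Item (1) of Theorem \ref{encero}.

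For Item (2), the argument parallels that of Theorem \ref{eraeldoce}. The key ingredient, to be established first, is the Hermitian version of Delsarte's theorem: for any $\Delta \subseteq \mathcal{H}_J$,
\[
(E_\Delta^{J,\sigma})^{\perp_h} = \mathbf{tr}\bigl((E_\Delta^J)^{\perp_h}\bigr),
\]
where $\mathbf{tr}$ denotes componentwise $\mathrm{tr}_{2r}^{2s}$. Granting this, Proposition \ref{laprocinco} gives $E_{\Delta^{\perp_h}}^J \subseteq (E_\Delta^J)^{\perp_h}$, whence $\mathbf{tr}(E_{\Delta^{\perp_h}}^J) \subseteq (E_\Delta^{J,\sigma})^{\perp_h}$. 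Using the compatibility $\mathrm{ev}_J \circ \mathcal{T} = \mathbf{tr} \circ \mathrm{ev}_J$ and the basis description from Theorem \ref{eselteotres} applied to $\Delta^{\perp_h}$, a cyclotomic dimension count gives
\[
\dim (E_\Delta^{J,\sigma})^{\perp_h} \geq \sum_{\mathbf{a} \in \mathcal{A},\; \mathfrak{I}_\mathbf{a} \cap \Delta^{\perp_h} \neq \emptyset} i_\mathbf{a},
\]
with equality when $\Delta \subseteq \mathcal{H'}$, since in that case Proposition \ref{laprocinco}(1) upgrades the initial inclusion to an equality. The hypothesis $\mathfrak{I}_\mathbf{a} \cap \Delta^{\perp_h} \neq \emptyset$ whenever $\mathfrak{I}_\mathbf{a} \subseteq \Delta$ supplies the Hermitian self-orthogonality $E_\Delta^{J,\sigma} \subseteq (E_\Delta^{J,\sigma})^{\perp_h}$, and a final application of Theorem \ref{bueno}(2) to $(E_\Delta^{J,\sigma})^{\perp_h}$ delivers the stated stabilizer code.

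The main obstacle is the Hermitian Delsarte theorem. My approach is to relate Hermitian and Euclidean duality through the componentwise $p^r$-power Frobenius $\phi$: for any linear $C \subseteq \mathbb{F}_{p^{2r}}^n$ one has $C^{\perp_h} = \phi^{-1}(C^\perp)$, and the subfield $\mathbb{F}_{p^{2s}}$ is $\phi$-stable. Combining this identity with the classical Delsarte theorem for Euclidean duality --- applied to the Frobenius twist of $E_\Delta^J$ --- and invoking surjectivity of $\mathrm{tr}_{2r}^{2s}$ onto $\mathbb{F}_{p^{2s}}$ reduces the Hermitian statement to the Euclidean one. Alternatively, a direct check works: for $\mathbf{y} \in E_\Delta^{J,\sigma}$ and $\mathbf{z} \in (E_\Delta^J)^{\perp_h}$ one verifies $\mathbf{y} \cdot_h \mathbf{tr}(\mathbf{z}) = 0$ by expanding the trace as a Frobenius-orbit sum and pairing each summand against $\mathbf{y}$ (using $\mathbf{y}^{p^{2s}} = \mathbf{y}$), which yields $\mathbf{tr}((E_\Delta^J)^{\perp_h}) \subseteq (E_\Delta^{J,\sigma})^{\perp_h}$; the reverse inclusion follows from a dimension comparison.
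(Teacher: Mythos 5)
Your proposal is correct and follows essentially the same route as the paper: Item (1) via Proposition \ref{laprocinco} and Theorem \ref{bueno}(2), and Item (2) by first establishing a Hermitian Delsarte theorem --- your ``direct check'' $\mathbf{y} \cdot_h \mathbf{tr}(\mathbf{z}) = \mathrm{tr}_{2r}^{2s}(\mathbf{y} \cdot_h \mathbf{z}) = 0$ is precisely the paper's computation, and the Frobenius-twist identity $C^{\perp_h} = \phi^{-1}(C^{\perp})$ is a valid alternative shortcut. The only step you leave implicit in the closing dimension comparison is the one the paper spells out, namely that $\mathbf{a} \mapsto -q\mathbf{a}$ induces a cardinality-preserving bijection $\mathfrak{I}_\mathbf{a} \mapsto \mathfrak{I}_{-q\mathbf{a}}$ on minimal cyclotomic sets, which is what makes $\dim \mathbf{tr}\bigl((E_\Delta^{J})^{\perp_h}\bigr)$ equal $n_J - \dim E_\Delta^{J,\sigma}$.
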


\begin{proof}
Item (1) follows from Statement (2) of Theorem \ref{bueno} and similar arguments to those given in Section \ref{laprimera}.  With respect to Item (2), we prove our second statement since the unique difference with respect the first one relies in Proposition \ref{laprocinco} and we can only ensure $(E_\Delta^{J})^{\perp_h} = E_{\Delta^{\perp_h}}$ when $\Delta \subseteq \mathcal{H'}$.

Delsarte theorem is stated for Euclidean dual. Let us show that it is  true in our case and so our result is proved with the same arguments given in Section \ref{laprimera} and (2) of Theorem \ref{bueno}.
We start by proving the inclusion $(E_\Delta^{J,\sigma})^{\perp_h} \supseteq \mathbf{tr} (E_\Delta^{J})^{\perp_h} $. Indeed, to do it, it suffices take $\mathbf{a} \in (E_\Delta^{J})^{\perp_h}$ and $\mathbf{b} \in E_\Delta^{J,\sigma}$ and consider the following chain of equalities
$$\mathbf{tr}(\mathbf{a}) \cdot_h \mathbf{b}= \mathbf{tr}(\mathbf{a}) \cdot \mathbf{b}^q = \mathrm{tr}_{2r}^{2s}(\mathbf{a} \cdot \mathbf{b}^q) = \mathrm{tr}_{2r}^{2s}(\mathbf{a} \cdot_h \mathbf{b}) = \mathrm{tr}_{2r}^{2s} (0) = 0.$$

Finally, we prove that the dimensions of the vector spaces over $\mathbb{F}_{p^{2s}}$,  $(E_\Delta^{J,\sigma})^{\perp_h}$ and $\mathbf{tr} (E_\Delta^{J})^{\perp_h} $, coincide, which concludes the proof. Write $\Delta = \Delta_1 \cup \Delta_2$ where $\Delta_1$ is the union of the minimal cyclotomic sets $\mathfrak{I}_\mathbf{a}$ which are included in $\Delta$. $\Delta_2$ does not contain any complete set $\mathfrak{I}_\mathbf{a}$. Theorem \ref{eselteotres} proves that the dimension of the vector space $(E_\Delta^{J,\sigma})^{\perp_h}$ is $n_J - \mathrm{card} \; \Delta_1$. Now, consider the set $ \mathcal{H}_J \setminus \{ ([-qa_1]_{N_1 -1}, [-qa_2]_{N_2 -1}, \ldots, [-qa_m]_{N_m- 1})\; | \; \boldsymbol{a} \in \Delta\}$ and notice that the set of tuples $([-qb_1]_{N_1 -1}, [-qb_2]_{N_2 -1}, \ldots, [-qb_m]_{N_m- 1})$, defined by the elements $\boldsymbol{b}$ in a minimal cyclotomic set  $\mathfrak{I}_\mathbf{a}$, determine a minimal cyclotomic set of the same size, which we denote $\mathfrak{I}_{-q\mathbf{a}}$. Moreover, $\mathfrak{I}_\mathbf{a} \neq \mathfrak{I}_\mathbf{a'}$ implies $\mathfrak{I}_{-q\mathbf{a}} \neq \mathfrak{I}_{-q\mathbf{a'}}$.
  Taking into account that
 \[
 \dim \mathbf{tr} (E_\Delta^{J})^{\perp_h} = \sum_{\mathbf{a} \in \mathcal{A}| \mathfrak{I}_\mathbf{a} \cap \Delta^{\perp_h} \neq \emptyset} i_\mathbf{a}
 \]
 and that $\mathfrak{I}_\mathbf{a} \cap \Delta^{\perp_h} = \emptyset$ if, and only if, $\mathbf{a} = - q \mathbf{c}$, $\mathfrak{I}_\mathbf{c} \subset \Delta_1$, we deduce that $\dim \mathbf{tr} (E_\Delta^{J})^{\perp_h} = n_J - \mathrm{card} \; \Delta_1$ and our proof is finished.
\end{proof}

We remark that, as in the previous section, the condition $\Delta \subseteq \Delta^{\perp_h}$ for sets $\Delta$ containing the element $\boldsymbol{0}$ can only happen when $p \mid N_j$, for some $j \notin J$.

As above, we give an example only to facilitate the readability of this section. Examples with good parameters will be found in the next section.
\begin{exa}
{\rm
With the previous notations, set $p=2, r=4, s=2, N_1=4, N_2=6$. Also, $m=2$ and $J=\{2\}$. We deduce our example from the second statement in Theorem \ref{esteo6}. The minimal cyclotomic sets are $\{(0,0)\}$, $\{(0,1),(0,4)\}$, $\{(0,2),(0,3)\}$, $\{(1,0)\}$, $\{(1,4),(1,1)\}$, $\{(1,3),(1,2)\}$, $\{(2,0)\}$,
$\{(2,1),(2,4)\}$, $\{(2,2),(2,3)\}$, $\{(3,0)\}$, $\{(3,2),(3,3)\}$, $\{(3,4),(3,1)\}$.  The set $\Delta$ in Example \ref{ejem1} cannot be used now, since $(\mathfrak{I}_{(2,0)} = \{(2,0)\}) \cap \Delta^{\perp_h} = \emptyset$. To prove it, it suffices to recall that $s=2$ and to apply the paragraph after the proof of Proposition \ref{dualH}.

Finally, consider the three minimal cyclotomic sets $\mathfrak{I}_{(0,1)}= \{(0,1),(0,4)\}$, $\mathfrak{I}_{(0,2)}= \{(0,2),(0,3)\}$, $\mathfrak{I}_{(2,1)}= \{(2,1),(2,4)\}$ and the set $\Delta_2= \mathfrak{I}_{(0,1)} \cup  \mathfrak{I}_{(0,2)} \cup \mathfrak{I}_{(2,1)}$ which satisfies the requirements in Theorem \ref{esteo6} because, by the above mentioned paragraph, to determine the Hermitian dual, each element in $\Delta_2$ erases from $\mathcal{H}_J$ another one which is not in $\Delta_2$. For instance,  $(0,4)$ and $(2,4)$, erase $(3,2)$ and $(2,2)$, respectively. Each minimal cyclotomic set has two elements and therefore, by \cite{magma} and Item (2) in Theorem \ref{esteo6}, we get a  $[[20, 20 -2 (2+2+2),3]]_2$ code.
}
\end{exa}

We conclude this section with a short remark on decoding of our codes.
\begin{rem}
{\rm
Since classical methods of error correction can be adapted to decode quantum codes \cite{20kkk, Steane-2,71kkk}, we briefly comment on the decoding of affine variety codes. The literature contains some decoding procedures for affine variety codes \cite{FL, Mar}, a subclass of $J$-affine variety codes, which we believe that could be easily adapted to decode $J$-affine variety codes as well. More efficient decoding procedures, which correct up to the Feng-Rao bound, have been described for affine variety codes defined by order functions (see \cite{geil2} and references therein). It would be interesting to get self-orthogonal order domain codes providing good stabilizer codes, and investigate whether our examples are given by codes of this type.}
\end{rem}
\section{Some good quantum codes}
\label{ejemplos}

\subsection{Stabilizer codes  with Euclidean inner product}
We devote this section to give some examples of stabilizer codes obtained applying Theorems \ref{ham}, \ref{elnuevo} and \ref{encero}, with the help of \cite{magma}. We first provide parameters of some stabilizer codes over $\mathbb{F}_2$. These codes come from subfield-subcodes of $J$-affine variety codes. With the above notation, set $p=2$, $r=7$, $s=1$, $N_1=128$ and consider codes $\mathcal{C}_i = C^\sigma_{\Delta_i}$, $\hat{\mathcal{C}}_i = C^\sigma_{\hat{\Delta}_i}$, $i=1,2$ and $\mathcal{C}_3 = C^\sigma_{\Delta_3}$, where we have omitted the super-index $J=\{1\}$. Table \ref{tabla1} shows their defining sets $\Delta$ and  parameters (as linear codes).

\begin{table}[ht]
\centering
\begin{tabular}{||c|c|c|c|c||}
  \hline \hline
  Code  & $n$ & $k$ & $d$  & Defining set $\Delta$ \\
  \hline \hline
  $\mathcal{C}_1$  & 127 & 85 & 12 &  $\Delta_1 = \{42, 84, 41, 82, 37, 74, 21,
2, 4, 8, 16, 32, 64, 1, 6,$\\
     & &  &  &
   $ 12, 24, 48, 96, 65, 3, 10, 20, 40, 80, 33, 66, 5, 14, 28,$\\
     & &  &  &
   $56, 112, 97, 67, 7, 18, 36, 72, 17, 34, 68, 9\}$  \\
  \hline
    $\hat{\mathcal{C}}_1$ & 127 & 91 &  12 &  $\hat{\Delta}_1 =\{0,
2, 4, 8, 16, 32, 64, 1 ,6, 12, 24, 48, 96, $\\
     & &  &  &
   $  65, 3, 10, 20, 40, 80, 33, 66, 5, 14, 28,$\\
     & &  &  &
   $56, 112, 97, 67, 7, 18, 36, 72, 17, 34, 68, 9\}$  \\
 \hline
    $\mathcal{C}_2$ & 127 & 99 & 8 &  $\Delta_2 =\{42, 84, 41, 82, 37, 74, 21,
2, 4, 8, 16, 32, 64, 1, 6,$\\
     & &  &  &
   $ 12, 24, 48, 96, 65, 3, 10, 20, 40, 80, 33, 66, 5\}$\\
     \hline
   $\hat{\mathcal{C}}_2$ & 127 & 105 & 8 &  $\hat{\Delta}_2 =\{0,
2, 4, 8, 16, 32, 64, 1 ,6, 12, 24, 48, 96, $\\
     & &  &  &
   $  65, 3, 10, 20, 40, 80, 33, 66, 5\}$\\
\hline
$\mathcal{C}_3$ & 127 & 106 & 7 &  $\Delta_3 =\{
2, 4, 8, 16, 32, 64, 1, 6, 12, 24, 48, 96, $\\
     & &  &  &
   $  65, 3, 10, 20, 40, 80, 33, 66, 5\}$\\
\hline
 \hline
\end{tabular}
\caption{$J$-affine variety codes over $\mathbb{F}_2$}
\label{tabla1}
\end{table}
Theorem \ref{elnuevo} applied to these codes provides the stabilizer code $C_1$. Table \ref{ta:best} displays the parameters of this stabilizer quantum code and several expurgations. According to \cite{codet}, the parameters of the codes in Table \ref{ta:best} improve the parameters of the best known binary quantum codes, and thus they are records.
\begin{table}[ht]
\centering
\begin{tabular}{||c|c|c|c|c||}
  \hline \hline
  Code  & $n$ & $k$ & $d \geq$ & Distance in \cite{codet}  \\
  \hline \hline
  $C_1$ & 127 & 63 & 12  & 11  \\
  \hline
   $C_2$ = Subcode$(C_1,62)$ & 127 & 62 & 12 & 11  \\
 \hline
  $C_3$ = Subcode$(C_1,61)$ & 127 & 61 & 12 & 11  \\
 \hline
  $C_{4}$ = Subcode$(C_1,60)$ & 127 & 60 & 12 & 11  \\
 \hline
 $C_{5}$ = Subcode$(C_1,59)$ & 127 & 59 & 12 & 11  \\
 \hline
  \hline
\end{tabular}
\caption{New records of quantum codes over $\mathbb{F}_2$}
\label{ta:best}
\end{table}

Consider now $\mathbb{F}_3$ as a ground field. Table \ref{tabla4} shows defining sets, values $p, r, s, N_j$ and sets $J$ as above defined to determine stabilizer codes coming from subfield-subcodes of $J$-affine variety codes.  They are obtained following Item (3) in Theorem \ref{encero}. The corresponding parameters are displayed in Table \ref{tabla3}. Parameters of the codes given by Steane enlargement, SE, can be seen in Table \ref{tabla5}. Notice that all these codes exceed the different known versions of the (quantum) Gilbert-Varshamov bound \cite{eck,mat,feng}, \cite[Lemma 31]{kkk}, which is noted in the tables by saying that are of type GV.
\begin{table}[htb]
\centering
\begin{tabular}{||c|c|c|c||c|c|c|c||}
  \hline \hline
  Code / Subset & $n$ & $k$ & $d \geq $ & Code / Subset & $n$ & $k$ & $d \geq $ \\
  \hline \hline
  $C_1$ / $\Delta_{1}$ & 144 & 132 & 3  &   $C_2$ / $\Delta_{2}$ & 144 & 126 & 4  \\
  \hline
  $C_3$ / $\Delta_{3}$ & 72 & 60 & 2  &  $C_4$ / $\Delta_{4}$ & 72 & 62 & 3  \\
  \hline
   $C_5$ / $\Delta_{5}$ & 72& 56 & 4  & $C_6$ / $\Delta_{6}$ & 72 & 44 & 6 \\
  \hline
  \hline
\end{tabular}
\captionof{table}{Stabilizer codes over $\mathbb{F}_3$}
\label{tabla3}
\end{table}
\begin{table}[htb]
\centering
\begin{tabular}{||c|c|c|c|c|c|c|c||}
  \hline \hline
 Subset & $p$ & $r$ & $s$ & $N_1-1$  & $N_2-1$ & $N_3-1$ & Set $J$\\
  \hline \hline
  $
  \begin{array}{r}
  \Delta_1= \{ (0, 0, 0 ), (7, 6, 1 ), (5, 2, 1), \\ (0, 3, 1), (0, 1, 1), (0, 4, 1)
\}\\
   \end{array}
   $
    & 3 &2 & 1 & 8 & 8& 2 & $\{2,3\}$ \\
  \hline
   $
  \begin{array}{r}
  \Delta_2= \{ (0, 0, 0 ), (7, 6, 1 ), (5, 2, 1), \\ (0, 3, 1), (0, 1, 1), (0, 4, 1), \\  (0, 0, 1 ),
  (6, 3, 0 ), (2, 1, 0)
 \}\\
   \end{array}
   $
    & 3 &2 & 1 & 8 & 8 & 2 & $\{2,3\}$\\
  \hline
     $
  \begin{array}{r}
  \Delta_3= \{ (0,4 )\}\\
   \end{array}
   $
  & 3 &4 & 1 & 8 & 8 & - & \{2\}  \\
  \hline
       $
  \begin{array}{r}
  \Delta_4= \{ ( 0,4 ), (0, 7 ),( 0, 5),( 7, 4 ),( 5, 4)
 \}\\
   \end{array}
   $
     & 3 &4 & 1 & 8 & 8 &  - & \{2\}  \\
  \hline
       $
  \begin{array}{r}
  \Delta_5= \{ ( 0,4 ), (0, 7 ),( 0, 5),( 7, 4 ), \\ ( 5, 4),  (0, 0),(4, 7),( 4, 5)
 \}\\
   \end{array}
   $
 & 3 &4 & 1 & 8 & 8 &  - & \{2\} \\
  \hline
   $
  \begin{array}{r}
  \Delta_6= \{  ( 0,4 ), (0, 7 ),( 0, 5),( 7, 4 ),( 5, 4),  \\ (0, 0),(4, 7),( 4, 5),(3, 7 ),(1, 5), \\ (0, 6),( 0, 2),(6, 5),(2, 7 )
 \}\\
   \end{array}
   $
     & 3 &4 & 1 & 8 & 8 &  - & \{2\}\\
  \hline
  \hline
   \end{tabular}
   \caption{Defining sets of $J$-affine variety codes over $\mathbb{F}_3$}
\label{tabla4}
\end{table}
\begin{table}[htb]
\centering
\begin{tabular}{||c|c|c|c|c||}
  \hline \hline
  Code  & $n$ & $k$ & $d \geq$   & Type\\
  \hline \hline
  $C_7$ =  $\mathrm{SE} (C_2,C_1)$& 144 & 129 & 4  & GV  \\
   \hline
 $C_{8}$ =  $\mathrm{SE} (C_4,C_3)$& 72 & 66 & 3 & GV   \\
  \hline
 $C_{9}$ =  $\mathrm{SE} (C_5,C_4)$& 72 & 59 & 4& GV   \\
  \hline
 $C_{10}$ =  $\mathrm{SE} (C_6,C_5)$& 72 & 50 & 6 & GV   \\
 \hline
  \hline
\end{tabular}
\caption{Stabilizer codes over $\mathbb{F}_3$ exceeding the Gilbert-Varshamov bounds. Obtained from codes $C_i$, $1 \leq i \leq 6$, in Table 3}
\label{tabla5}
\end{table}

Finally we use Theorem \ref{elnuevo} to give a stabilizer code $C$ over the field $\mathbb{F}_4$ with parameters $[[63,45, \geq 6]]_4$, which is of type GV. Notice that La Guardia in \cite{lag3} (see also \cite{ham}) gives two stabilizer codes with parameters $[[63,42, \geq 6]]_4$  and $[[63,46, \geq 5]]_4$. Our code improves the first one and has relative parameters better than the second one. To construct $C$, it suffices to take values $p=2$, $r=6$, $s=2$ and $N_1=64$ and apply Theorem \ref{elnuevo} with respect to the affine variety codes $\mathcal{C}_i = C^\sigma_{\Delta_i}$, $\hat{\mathcal{C}}_i = C^\sigma_{\hat{\Delta}_i}$, $i=1,2$ and $\mathcal{C}_3 = C^\sigma_{\Delta_3}$, again the super-index $J=\{1\}$ is omitted. Table \ref{tablanueva} shows the sets $\Delta$ and their parameters. Notice that codes and parameters in Table \ref{tablanueva} correspond to linear codes.

\begin{table}[ht]
\centering
\begin{tabular}{||c|c|c|c|c||}
  \hline \hline
  Code  & $n$ & $k$ & $d$  & Defining set $\Delta$ \\
  \hline \hline
  $\mathcal{C}_1$  & 63 & 52 & 6 &  $\Delta_1= \{0, 21, 8, 32, 2, 40, 34, 10, 62, 59, 47\}$\\
  \hline
    $\hat{\mathcal{C}}_1$ & 63 & 53 &  6 &  $\hat{\Delta}_1 = \{ 21, 8, 32, 2, 40, 34, 10, 62, 59, 47\}$\\
 \hline
    $\mathcal{C}_2$ & 63 & 55 & 5 &  $\Delta_2 = \{ 0, 21, 8, 32, 2, 40, 34, 10 \}$\\
     \hline
   $\hat{\mathcal{C}}_2$ & 63 & 56 &  4 &  $\hat{\Delta}_2 =  \{ 21, 8, 32, 2, 40, 34, 10\}$\\
\hline
$\mathcal{C}_3$ & 63 & 56 & 4 &  $\Delta_3 = \{21, 8, 32, 2, 40, 34, 10\}$\\
\hline
 \hline
\end{tabular}
\caption{$J$-affine variety codes over $\mathbb{F}_4$  that produce a $[[63,45, \geq 6]]_4$ quantum code  by Theorem \ref{elnuevo}}
\label{tablanueva}
\end{table}

\subsection{Stabilizer codes  with the Hermitian inner product}

This section gives examples of stabilizer codes obtained following Theorem \ref{esteo6}. They are constructed from subfield-subcodes of $J$-affine variety codes and we have considered duality with respect to the Hermitian inner product. We group them in tables corresponding to the same ground field. We display first the parameters and the type (GV or not) and afterwards the defining set $ \Delta$ and the corresponding values $p, r, s, N_j$ and sets $J$. Tables \ref{tabla12} and \ref{tabla13} (respectively \ref{tabla10} and \ref{tabla11}, \ref{tabla14} and \ref{tabla15}, \ref{tabla6} and \ref{tabla7}, \ref{tabla8} and \ref{tabla9}) correspond to stabilizer codes over $\mathbb{F}_2$ (respectively, $\mathbb{F}_3$, $\mathbb{F}_4$, $\mathbb{F}_5$, $\mathbb{F}_7$).

We conclude by adding that the codes in Section \ref{ejemplos} improve  the parameters of those codes in \cite{edel} which have the same length. In addition, our code in Table \ref{tabla8} with parameters $[[144,134, \geq 4]]_7$ also improves the parameters $[[144,132, \geq 4]]_7$ which can be obtained by applying \cite[Theorem 39]{lag2}.

\begin{table}[ht]
\centering
\begin{tabular}{||c|c|c|c|c||c|c|c|c|c||}
  \hline \hline
  Code / Subset & $n$ & $k$ & $d \geq $ & Type & Code / Subset & $n$ & $k$ & $d \geq $ & Type  \\
  \hline \hline
  $C_1$ / $\Delta_{1}$ & 225 & 205 & 4 & GV &   $C_2$ / $\Delta_{2}$ & 225 & 197 & 5 & GV \\
  \hline
  $C_3$ / $\Delta_{3}$ & 240 & 222 & 4 & GV &    &  &  &  &  \\
  \hline
  \hline
\end{tabular}
\captionof{table}{Stabilizer codes over $\mathbb{F}_2$}
\label{tabla12}
\end{table}
\begin{table}[ht]
\centering
\begin{tabular}{||c|c|c|c|c|c|c||}
  \hline \hline
 Subset & $p$ & $r$ & $s$ & $N_1-1$  & $N_2-1$ &  Set $J$\\
  \hline \hline
  $
  \begin{array}{r}
  \Delta_1= \{( 12, 5 ),( 3, 5 ),( 9, 13),( 6, 7 ),\\ (13, 13 ),( 7, 7 ),(5, 9 ),(5, 6),\\( 9, 0 ),( 6, 0 )
 \}\\
   \end{array}
   $
    & 2 &4 & 2 & 15 & 15 & \{1,2\}\\
  \hline
   $
  \begin{array}{r}
  \Delta_2= \{  ( 12, 5 ),(3, 5),(9, 13 ),( 6, 7 ),\\(13, 13 ),(7, 7 ),(5, 9 ),(5, 6 ),\\( 9, 0 ),(6, 0 ),( 10, 8),( 10, 2),\\(12, 12 ),(3, 3 )
 \}\\
   \end{array}
   $
    & 2 &4 & 2 & 15 & 15 & \{1,2\}\\
  \hline
    $
  \begin{array}{r}
  \Delta_3= \{ (4, 4 ),(1, 1 ),(0, 9 ),(0, 6),( 0, 14 ),\\(0, 11 ),( 8, 4),( 2, 1 ),( 0, 10 )
 \}\\
   \end{array}
   $
    & 2 &4 & 2 & 15 & 15& \{2\}\\
  \hline
  \hline
   \end{tabular}
   \captionof{table}{Defining sets of the  codes over $\mathbb{F}_2$ in Table \ref{tabla12}}
\label{tabla13}
\end{table}

\begin{table}[ht]
\centering
\begin{tabular}{||c|c|c|c|c||c|c|c|c|c||}
  \hline \hline
  Code / Subset & $n$ & $k$ & $d \geq $ & Type & Code / Subset & $n$ & $k$ & $d \geq $ & Type  \\
  \hline \hline
  $C_1$ / $\Delta_{1}$ & 40 & 32 & 4 & GV &   $C_2$ / $\Delta_{2}$ & 40 & 26 & 6 & GV \\
  \hline
  $C_3$ / $\Delta_{3}$ & 40 & 20 & 7 & GV &   $C_4$ / $\Delta_{4}$ & 40 & 16 & 8 & GV \\
    \hline
  $C_5$ / $\Delta_{5}$ & 45 & 33 & 4 & GV &   $C_6$ / $\Delta_{6}$ & 45 & 27 & 5 &  \\
  \hline
  $C_7$ / $\Delta_{7}$ & 81 & 73 & 4 & GV &   $C_8$ / $\Delta_{8}$ & 91 & 73 & 6 & GV \\
  \hline
  \hline
\end{tabular}
\captionof{table}{Stabilizer codes over $\mathbb{F}_3$}
\label{tabla10}
\end{table}
\begin{table}[ht]
\centering
\begin{tabular}{||c|c|c|c|c|c|c||}
  \hline \hline
 Subset & $p$ & $r$ & $s$ & $N_1-1$  & $N_2-1$ &  Set $J$\\
  \hline \hline
  $
  \begin{array}{r}
  \Delta_1= \{2,5,15,18 \}\\
   \end{array}
   $
    & 3 &4 & 2 & 40 & - & \{1\}\\
  \hline
   $
  \begin{array}{r}
  \Delta_2= \{  19, 11 , 15 ,36, 4 , 5 , 38, 22
 \}\\
   \end{array}
   $
     & 3 &4 & 2 & 40 & - & \{1\}\\
  \hline
    $
  \begin{array}{r}
  \Delta_3= \{  35, 18, 2 , 36, 4 , 14, 6 , 23, 7 , 25  \}\\
   \end{array}
   $
   & 3 &4 & 2 & 40 & - & \{1\}\\
  \hline
    $
  \begin{array}{r}
  \Delta_4= \{  18, 2 ,15 , 27, 3,24, 16 , 36, 4 ,
     5 , 14, 6  \}\\
   \end{array}
   $
   & 3 &4 & 2 & 40 & - & \{1\}\\

  \hline
   $
  \begin{array}{r}
  \Delta_5= \{
    ( 0, 0 ),(0, 3 ),( 0, 2 ),( 6, 4), \\( 6, 1),( 3, 0)\}\\
   \end{array}
   $
   & 3 &4 & 2 & 8 & 5 & \{2\}\\
  \hline
   $
  \begin{array}{r}
  \Delta_6= \{  (0, 4 ),(0, 1),( 0, 3),(0, 2),(1, 4), \\
  ( 1, 1),( 2, 4 ),( 2, 1 ),(3, 0)  \}\\
   \end{array}
   $
    & 3 &4 & 2 & 8 & 5 & \{2\}\\
    \hline
   $
  \begin{array}{r}
  \Delta_7= \{ 0 ,70,71,9 \}\\
   \end{array}
   $
    & 3 &4 & 2 & 80 & - & $\emptyset$ \\
    \hline
   $
  \begin{array}{r}
  \Delta_8= \{   9, 81, 1 ,
     50, 86, 46 ,
     54, 31, 6  \}\\
   \end{array}
   $
    & 3 &6 & 2 & 91 & - & \{1\}\\
  \hline
  \hline
   \end{tabular}
   \captionof{table}{Defining sets of the codes over $\mathbb{F}_3$ in Table \ref{tabla10}}
\label{tabla11}
\end{table}

\begin{table}[ht]
\begin{tabular}{||c|c|c|c|c||c|c|c|c|c||}
  \hline \hline
  Code / Subset & $n$ & $k$ & $d \geq $ & Type & Code / Subset & $n$ & $k$ & $d \geq $ & Type  \\
  \hline \hline
  $C_1$ / $\Delta_{1}$ & 51 & 41 & 4 & GV &   $C_2$ / $\Delta_{2}$ & 51 & 39 & 5 & GV \\
  \hline
  $C_3$ / $\Delta_{3}$ & 51 & 37 & 6 & GV &   $C_4$ / $\Delta_{4}$ & 51 & 36 & 7 & GV \\
 \hline
  $C_5$ / $\Delta_{5}$ & 52 & 44 & 4 & GV &   $C_6$ / $\Delta_{6}$ & 52 & 38 & 5 & GV \\
  \hline
  $C_7$ / $\Delta_{7}$ & 52 & 36 & 6 & GV &   $C_8$ / $\Delta_{8}$ & 255 & 245 & 4 & GV \\
   \hline
  $C_9$ / $\Delta_{9}$ & 54 & 44 & 4 & GV &   $C_{10}$ / $\Delta_{10}$ & 54 & 36 & 6 & GV \\
  \hline
  \hline
\end{tabular}
\captionof{table}{Stabilizer codes over $\mathbb{F}_4$ exceeding the Gilbert-Varshamov bounds}
\label{tabla14}
\end{table}
\begin{table}[ht]
\centering
\begin{tabular}{||c|c|c|c|c|c|c||}
  \hline \hline
 Subset & $p$ & $r$ & $s$ & $N_1-1$  & $N_2-1$ & Set $J$\\
  \hline \hline
  $
  \begin{array}{r}
  \Delta_1= \{34 ,32, 2 ,42, 9  \}\\
   \end{array}
   $
    & 2 &8& 4 & 51 & - & \{1\}\\
  \hline
   $
  \begin{array}{r}
  \Delta_2= \{     32, 2 ,45, 6 ,10, 7 \}\\
   \end{array}
   $
      & 2 &8& 4 & 51 & - & \{1\}\\
  \hline
    $
  \begin{array}{r}
  \Delta_3= \{    34 ,29, 5 ,26, 8 ,27, 24  \}\\
   \end{array}
   $
    & 2 &8& 4 & 51 & - & \{1\}\\
    \hline
    $
  \begin{array}{r}
  \Delta_4= \{  23, 11 ,39, 12 , 16, 1 , 34 , 50, 35
 \}\\
   \end{array}
   $
    & 2 &8& 4 & 51 & - & \{1\}\\
    \hline
  $
  \begin{array}{r}
  \Delta_5= \{ 0 ,34 , 26, 8  \}\\
   \end{array}
   $
    & 2 &8& 4 & 51 & - & $\emptyset$\\
  \hline
   $
  \begin{array}{r}
  \Delta_6= \{  0 ,32, 2 ,49, 19 , 30, 21  \}\\
   \end{array}
   $
      & 2 &8& 4 & 51& - & $\emptyset$\\
  \hline
    $
  \begin{array}{r}
  \Delta_7= \{    40, 28 ,0 , 34 , 48, 3 , 26, 8   \}\\
   \end{array}
   $
    & 2 &8& 4 & 51 & - & $\emptyset$\\
    \hline
    $
  \begin{array}{r}
  \Delta_8= \{     114, 39 , 17 ,241, 31  \}\\
   \end{array}
   $
    & 2 &8& 4 & 255 & - & \{1\} \\
    \hline
    $
  \begin{array}{r}
  \Delta_9= \{  ( 0, 0),(0, 1 ),(0, 2 ),( 16, 0),(1, 0)\}\\
   \end{array}
   $
    & 2 &8& 4 & 17 & 3 & \{2\}\\
    \hline
    $
  \begin{array}{r}
  \Delta_{10}= \{ (0, 0),( 0, 1 ),(13, 0 ),\\( 4, 0),(0, 2),( 12, 2),( 5, 2 ),\\(15, 1 ),( 2, 1 )  \}\\
   \end{array}
   $
  & 2 &8& 4 & 17 & 3 & \{2\}\\
  \hline
  \hline
   \end{tabular}
   \captionof{table}{Defining sets of the codes over $\mathbb{F}_4$ in Table \ref{tabla14}}
\label{tabla15}
\end{table}
\begin{table}[htb]
\centering
\begin{tabular}{||c|c|c|c|c||c|c|c|c|c||}
  \hline \hline
  Code / Subset & $n$ & $k$ & $d \geq $ & Type & Code / Subset & $n$ & $k$ & $d \geq $ & Type  \\
  \hline \hline
  $C_1$ / $\Delta_{1}$ & 52 & 36 & 6 & GV &   $C_2$ / $\Delta_{2}$ & 104 & 96 & 4 & GV \\
  \hline
  $C_3$ / $\Delta_{3}$ & 112 & 102 & 4 & GV &   $C_4$ / $\Delta_{4}$ & 156 & 148 & 4 & GV \\
  \hline
    $C_5$ / $\Delta_{5}$ & 72 & 62 & 4 & GV &   $C_6$ / $\Delta_{6}$ & 96 & 86 & 4 & GV \\
  \hline
  \hline
\end{tabular}
\captionof{table}{Stabilizer codes  over $\mathbb{F}_5$ exceeding the Gilbert-Varshamov bounds}
\label{tabla6}
\end{table}
\begin{table}[ht]
\centering
\begin{tabular}{||c|c|c|c|c|c|c|c||}
  \hline \hline
 Subset & $p$ & $r$ & $s$ & $N_1-1$  & $N_2-1$ & $N_3-1$ & Set $J$ \\
  \hline \hline
  $
  \begin{array}{r}
  \Delta_1= \{ 15, 11,32, 20 ,30, 22 ,17, 9
 \} \\
   \end{array}
   $
    & 5 &4 & 2 & 52 & - &- & \{1\} \\
  \hline
   $
  \begin{array}{r}
  \Delta_2= \{ (3, 0 ),(5, 0 ),( 0, 8),(0,5)
 \}\\
   \end{array}
   $
    & 5 &4 & 2 & 8 & 13 &- & \{1,2\}\\
  \hline
    $
  \begin{array}{r}
  \Delta_3= \{ ( 0,1 ),( 0,2 ),( 0, 7),(12,1),(1,1)
 \}\\
   \end{array}
   $
    & 5 &4 & 2 & 13 & 8 &- & \{2\}\\
  \hline
     $
  \begin{array}{r}
  \Delta_4= \{ ( 1,0 ),( 3,0 ),( 2,6),(2,7)
 \}\\
   \end{array}
   $
    & 5 &4 & 2 & 12 & 13 &- & \{1,2\}\\
  \hline
       $
  \begin{array}{r}
  \Delta_5= \{(2, 6, 2),( 1, 7, 2),(0, 5, 1),\\ ( 2, 2, 0),( 0, 3, 2)
 \}\\
   \end{array}
   $
    & 5 &4 & 2 & 3 & 8 &3 & \{1,2,3\} \\
  \hline
       $
  \begin{array}{r}
  \Delta_6= \{ ( 0, 7, 2 ),( 0, 2, 0),( 2, 5, 0), \\ ( 0, 5, 1),( 1, 2, 0)
 \}\\
   \end{array}
   $
    & 5 &4 & 2 & 3 & 8 &3 & \{2,3\}\\
  \hline
  \hline
   \end{tabular}
   \captionof{table}{Defining sets of the codes over $\mathbb{F}_5$ in Table \ref{tabla6}}
\label{tabla7}
\end{table}

\begin{table}[ht]
\centering
\begin{tabular}{||c|c|c|c|c||c|c|c|c|c||}
  \hline \hline
  Code / Subset & $n$ & $k$ & $d \geq $ & Type & Code / Subset & $n$ & $k$ & $d \geq $ & Type  \\
  \hline \hline
  $C_1$ / $\Delta_{1}$ & 90 & 78 & 4 & GV &   $C_2$ / $\Delta_{2}$ & 80 & 72 & 4 & GV \\
  \hline
  $C_3$ / $\Delta_{3}$ & 144 & 134 & 4 & GV &    &  &  &  &  \\

  \hline
  \hline
\end{tabular}
\captionof{table}{Stabilizer codes over $\mathbb{F}_7$ exceeding the Gilbert-Varshamov bounds}
\label{tabla8}
\end{table}
\begin{table}[ht]
\centering
\begin{tabular}{||c|c|c|c|c|c|c||}
  \hline \hline
 Subset & $p$ & $r$ & $s$ & $N_1-1$  & $N_2-1$  & Set $J$ \\
  \hline \hline
  $
  \begin{array}{r}
  \Delta_1= \{( 4, 0 ),( 5, 5),( 5, 9),( 5, 6),\\( 1, 5), ( 0, 10)
 \}\\
   \end{array}
   $
    & 7 &4 & 2 & 6 & 15 & \{1,2\}\\
  \hline
   $
  \begin{array}{r}
  \Delta_2= \{  (1, 0 ),(2, 4 ),( 2, 1),( 3, 0)
 \}\\
   \end{array}
   $
    & 7 &4 & 2 & 16 & 5 & \{1,2\}\\
  \hline
    $
  \begin{array}{r}
  \Delta_3= \{ (10, 2), (15, 2), (19, 1), (8, 0), (2, 2)\}\\
   \end{array}
   $
    & 7 &2 & 2 & 48 & 3 & \{1,2\}\\
  \hline
  \hline
   \end{tabular}
   \captionof{table}{Defining sets of the codes over $\mathbb{F}_7$ in Table \ref{tabla8}}
\label{tabla9}
\end{table}

\section*{Appendix}
We devote this appendix to prove Theorem \ref{elnuevo} which was stated in the introduction and preliminaries of this paper. To do this, we adapt to our purposes some facts described in \cite {Steane-E} and \cite{ham}. Consider the vector space $\mathbb{F}_q^{2n}$ and the symplectic inner product $(\mathbf{u}|\mathbf{v}) \cdot_s (\mathbf{u}'|\mathbf{v}') = \mathbf{u} \cdot \mathbf{v}' - \mathbf{v} \cdot \mathbf{u}'$. Recall that the weight $\mathrm{w}(\mathbf{u}|\mathbf{v})$  of a word  $(\mathbf{u}|\mathbf{v})$ as above is the number of indexes $i$, $1 \leq i \leq n$, such that either $u_i$ or $v_i$ (or both) are not zero, where the $u_i$ (respectively, $v_i$) represent the coordinates of the vector $\mathbf{u}$ (respectively, $\mathbf{v}$). Following \cite{AK} (see also \cite{19kkk}), to get our stabilizer code we only need to find a vector subspace $S$ in $\mathbb{F}_q^{2n}$ such that $S^{\perp_s} \subseteq S$ with dimension $k_2 + \hat{k}_1$ and minimum distance larger than or equal to that stated in the statement. Let us describe it. Set $G_1$ ($\hat{G}_1$, $L$, respectively) generator matrices of the codes $C_1$ ($\hat{C}_1$, $D$, respectively) and let $S$ be the code  of $\mathbb{F}_q^{2n}$ generated by the matrix
\[
\left(
    \begin{array}{cc}
      L & AL \\
      G_1 & 0 \\
      0 & \hat{G}_1 \\
    \end{array}
  \right),
  \]
where $A$ is a  fixed point free square matrix (see \cite {Steane-E} and \cite{ham} for its existence). Our hypotheses imply $\hat{k}_1 + k_2 = k_1 + \hat{k}_2$ and that the rows of the previous matrix are linearly independent, therefore, for computing the dimension of $S$, it suffices to see that the number of rows is $k_2 -k_1 + k_1 + \hat{k}_1 = k_2 + \hat{k}_1$.

Let $H_2$ ($\hat{H}_2$, respectively) be a parity check matrix of the code $C_2$ ($\hat{C}_2$, respectively), one can found a matrix $B$ such that
$$\left(
\begin{array}{c}
  H_2 \\
  B \\
   \end{array}
   \right),  \left( \left(
\begin{array}{c}
  \hat{H}_2 \\
  B \\
   \end{array}
   \right), \mathrm{respectively}
   \right)$$
   is a parity check matrix for $C_1$ (respectively, for $\hat{C}_1$). Now defining the matrix $K= BL^t (A^t)^{-1} (B L^t)^{-1}$,  it is not difficult to prove that
   \[
\left(
    \begin{array}{cc}
      KB & B \\
      \hat{H}_2 & 0 \\
      0 & H_2 \\
    \end{array}
  \right),
  \]
is a parity check matrix for the code $S$ and therefore one has that $S^{\perp_s} \subseteq S$.

To end our proof,  it only remains to study what happens with the weight $\mathrm{w}(\mathbf{u}|\mathbf{v})$ for elements $(\mathbf{u}|\mathbf{v}) \in S$. First assume $q=2$, a generic element in $S$ has the form $(\mathbf{v}_1 L +\mathbf{v}_2 G_1 | \mathbf{v}_1 AL +\mathbf{v}_3 \hat{G}_1)$, where $\mathbf{v}_1, \mathbf{v}_2, \mathbf{v}_3$ are suitable vectors with coordinates  in $\mathbb{F}_q$. When $\mathbf{v}_1$ is the zero vector, $\mathbf{u}$ must be in $C_1$ and $\mathbf{v}$ in $\hat{C}_1$, which proves that, in this case, $\mathrm{w}(\mathbf{u}|\mathbf{v})$ must be larger than or equal to the minimum of the values $d_1$ and $\hat{d}_1$. Otherwise, $\mathbf{v}_1 \neq \mathbf{0}$,  one can use the property
\[
\mathrm{w}(\mathbf{u}|\mathbf{v}) = \frac{\mathrm{wt}(\mathbf{u})+ \mathrm{wt}(\mathbf{v}) + \mathrm{wt}(\mathbf{u}+\mathbf{v})}{2},
\]
where $\mathrm{wt}$ denotes the standard weight of a word in a code in $\mathbb{F}_q^n$, and this concludes the proof since $\mathbf{u} \in C_2$, $\mathbf{v} \in \hat{C}_2$, $\mathbf{u}+\mathbf{v} \in C_3$ and the fact that $(C_1 + \hat{C}_1) \cap D = \{\mathbf{0}\}$ implies that none of the previous vectors are zero.

Let us consider $q \neq 2$, we will only study $\mathrm{w}(\mathbf{u}|\mathbf{v})$ for $\mathbf{v}_1 \neq 0$. For convenience assume that the coordinates $u_{t+1}, u_{t+2}, \ldots, u_n$ of the word $\mathbf{u}$ are zero and that this does not happen with the remaining coordinates. As showed in \cite{ham}, there exists $\lambda \in \mathbb{F}_q$ such that
\[
\mathrm{w}(\mathbf{u}|\mathbf{v}) = t + \mathrm{wt} (v_{t+1}, v_{t+2}, \ldots, v_n ) \geq \mathrm{wt}(\mathbf{v} - \lambda \mathbf{u}) + \frac{\mathrm{wt}(\mathbf{u})}{q}
\]
and, symmetrically, $\mathrm{w}(\mathbf{u}|\mathbf{v}) \geq \mathrm{wt}(\mathbf{u} - \lambda' \mathbf{v}) + \frac{\mathrm{wt}(\mathbf{v})}{q}$, for some $\lambda' \in \mathbb{F}_q$, holds. This finishes the proof because, as before, our hypotheses imply that $\mathbf{0} \neq \mathbf{v} - \lambda \mathbf{u}$ and $\mathbf{0} \neq \mathbf{u} - \lambda' \mathbf{v}$ belong to $C_3$, $\mathbf{0} \neq \mathbf{u} \in C_2$ and $\mathbf{0} \neq \mathbf{v} \in \hat{C}_2$.

\begin{rem}
{\rm
Notice that the Hamada's  generalization of the Steane's enlargement, Theorem \ref{ham} in this work, is a particular case of Theorem \ref{elnuevo} that holds when $C_1 = \hat{C}_1$.
}
\end{rem}

\section*{Acknowledgment}

The authors wish to thank Ryutaroh Matsumoto and the anonymous reviewers for helpful comments on this paper.


\begin{thebibliography}{99}
\footnotesize \setlength{\baselineskip}{3mm}

\bibitem{Akk} Aly, S.A., Klappenecker, A., Kumar, S., Sarvepalli, P.K. On quantum and classical BCH codes, {\it IEEE Trans. Inf. Theory} {\bf 53} (2007) 1183-1188.

\bibitem{AK} Ashikhmin, A., Knill, E. Non-binary quantum stabilizer codes, {\it IEEE Trans. Inf. Theory} {\bf 47} (2001) 3065-3072.

\bibitem{7kkk} Ashikhmin, A., Barg, A., Knill, E., Litsyn, S. Quantum error-detection I: Statement of the problem, {\it IEEE Trans. Inf. Theory} {\bf 46} (2000) 778-788.

\bibitem{8kkk} Ashikhmin, A., Barg, A., Knill, E., Litsyn, S. Quantum error-detection II: Bounds, {\it IEEE Trans. Inf. Theory} {\bf 46} (2000) 789-800.


\bibitem{BCM} Bian, Z. et al. Experimental determination of Ramsey numbers, {\it Phys. Rev. Lett.} {\bf 111} 130505 (2013).

\bibitem{BE} Bierbrauer, J., Edel, Y. Quantum twisted codes, {\it J. Comb. Designs} {\bf 8} (2000) 174-188.


\bibitem{maria-michael} Bras-Amor\'{o}s, M., O'Sullivan, M.E. Duality for some families of correction capability optimized evaluation codes, {\it Adv. Math. Commun.} {\bf 2} (2008) 15-33.

\bibitem{18kkk} Calderbank, A.R., Rains, E.M., Shor, P.W., Sloane, N.J.A. Quantum error correction and orthogonal geometry, {\it Phys. Rev. Lett.} {\bf 76} (1997) 405-409.

\bibitem{19kkk} Calderbank, A.R., Rains, E.M., Shor, P.W., Sloane, N.J.A. Quantum error correction via codes over GF(4), {\it IEEE Trans. Inf. Theory} {\bf 44} (1998) 1369-1387.

\bibitem{20kkk} Calderbank A.R., Shor, P. Good quantum error-correcting codes
exist, {\it Phys. Rev. A} {\bf 54} (1996) 1098-1105.


\bibitem{delsarte}   Delsarte, P. On subfield subcodes of modified Reed-Solomon codes, {\it IEEE Trans. Inform. Theory} {\bf IT-21} (1975) 575-576.

\bibitem{8AS} Dieks, D. Communication by EPR devices, {\it Phys. Rev. A} {\bf 92} (1982) 271.

\bibitem{eck} Ekert, A., Macchiavello, C. Quantum error correction for communication, {\it Phys. Rev. Lett.} {\bf 77} (1996) 2585.

\bibitem{edel} Edel, Y. \emph{Some good quantum twisted codes}.
Online available at {\tt http://\-www.\-mathi.\-uni-hei\-delberg.de\-/~yves/Matritzen/QTBCH/QTBCHIndex.html}.

\bibitem{ez} Ezerman, M.F., Jitman, S., Ling, S., Pasechnik. D.V. CSS-like constructions of asymmetric quantum codes, {\it IEEE Trans. Inf. Theory} {\bf 59} (2013) 6732-6754.

\bibitem{35kkk} Feng, K. Quantum error correcting codes. In Coding Theory and Cryptology, Word Scientific, 2002, 91-142.

\bibitem{feng} Feng, K., Ma, Z.  A finite Gilbert-Varshamov bound for pure stabilizer quantum codes, {\it IEEE Trans. Inf. Theory} {\bf 50} (2004) 3323-3325.


\bibitem{FL}Fitzgerald, J., Lax, R.F. Decoding affine variety codes using Gröbner bases,
{\it Des. Codes Cryptogr.} {\bf 13} (1998) 147-158.

\bibitem{galindo-hernando} Galindo, C., Hernando, F. Quantum codes from affine variety codes and their subfield subcodes.  {\it Des. Codes Crytogr.} {\bf 76} (2015) 89-100.

\bibitem{gal-her-rua} Galindo, C., Hernando, F., Ruano, D. New quantum codes from evaluation and matrix-product codes. Preprint arXiv:1406.0650.

\bibitem{galmon} Galindo C., Monserrat, F. Delta-sequences and evaluation codes defined by plane valuations at infinity, {\it Proc. London Math. Soc.} {\bf 98} (2009) 714-740.

\bibitem{galmon2} Galindo C., Monserrat, F. Evaluation codes defined by finite families of plane valuations at infinity. {\it Des. Codes Crytogr.} {\bf 70} (2014) 189-213.

\bibitem{Geil-Affine} Geil, O. {\it Evaluation codes from an affine variety code perspective}.  Advances in algebraic geometry codes, Ser. Coding Theory Cryptol. 5 (2008) 153-180. World Sci. Publ., Hackensack, NJ. Eds.: E. Martinez-Moro, C. Munuera, D. Ruano.

\bibitem{geil} Geil, O. Evaluation codes from order domain theory, {\it Finite Fields Appl.} {\bf 14} (2008) 92-123.

\bibitem{geil2} Geil, O., Matsumoto, R., Ruano, D. Feng-Rao decoding of primary codes, {\it Finite Fields Appl.} {\bf 23} (2013) 35-52.



\bibitem{38kkk} Gottesman, D. A class of quantum error-correcting codes saturating
the quantum Hamming bound, {\it Phys. Rev. A} {\bf 54} (1996) 1862-1868.

\bibitem{codet} Grassl, M. \emph{Bounds on the minimum distance of linear codes}. Online available at {\tt http://www.codetables.de}, accessed on 15th February 2015.

\bibitem{45kkk}  Grassl, M., R\"{o}tteler, M. Quantum BCH codes. In Proc. X Int. Symp. Theor.  elec. Eng. Germany 1999, 207-212.

\bibitem{opt} Grassl, M., Beth, T., R\"{o}tteler, M. On optimal quantum codes, {\it Int. J. Quantum Inform.} {\bf 2} (2004) 757-775.


\bibitem{ham} Hamada, M. Concatenated quantum codes constructible in polynomial time: Efficient decoding and error correction, {\it IEEE Trans. Inform. Theory} {\bf 54} (2008) 5689-5704.






\bibitem{jin} Jin, L., Ling, S., Luo, J., Xing, C. Application of classical Hermitian self-orthogonal MDS codes to quantum MDS codes, {\it IEEE Trans. Inform. Theory} {\bf 56} (2010) 4735-4740.



\bibitem{kkk} Ketkar, A.,  Klappenecker, A., Kumar, S., Sarvepalli, P.K. Nonbinary stabilizer codes over finite fields, {\it IEEE Trans. Inform. Theory} {\bf 52} (2006) 4892-4914.


\bibitem{lag3} La Guardia, G.G. Construction of new families of nonbinary quantum BCH codes, {\it Phys. Rev. A} {\bf 80} (2009) 042331.

\bibitem{lag2} La Guardia, G.G. On the construction of nonbinary quantum BCH codes, {\it IEEE Trans. Inform. Theory} {\bf 60} (2014) 1528-1535.

\bibitem{lag1} La Guardia, G.G., Palazzo, R. Constructions of new families of nonbinary CSS codes, {\it Discrete Math.} {\bf 310} (2010) 2935-2945.




\bibitem{magma} Magma Computational Algebra System. {\tt http://magma.maths.usyd.edu.au/magma/}.

\bibitem{Mar} Marcolla, C., Orsini, E., Sala M., Improved decoding of affine-variety codes, {\it J. Pure Appl. Algebra} {\bf 216} (2012) 147-158.

\bibitem{71kkk} Matsumoto, R., Uyematsu, T. Constructing quantum error correcting codes for $p^m$ state systems from classical error correcting codes. {\it IEICE Trans. Fund.} {\bf E83-A} (2000) 1878-1883.


\bibitem{mat} Matsumoto, R., Uyematsu, T. Lower bound for the quantum capacity of a discrete memoryless quantum channel, {\it J. Math. Phys} {\bf 43} (2002) 4391-4403.




\bibitem{Sarvepalli} Sarvepalli, P.K., Klappenecker, A.
 Nonbinary quantum Reed-Muller codes. In Proc. 2005 Int. Symp. Information Theory,  1023-1027.

\bibitem{sar1} Sarvepalli, P.K., Klappenecker, A., R\"{o}tteler, M. Asymmetric quantum codes: constructions, bounds and performance. {\it Proc. Royal Soc. A} {\bf 465} (2000) 1645-1672.

\bibitem{22RBC} Shor, P.W. Polynomial-time algorithms for prime factorization and discrete logarithms on a quantum computer, in Proc. 35th  ann. symp. found. comp. sc., {\it IEEE Comp. Soc. Press} 1994, 124-134.

\bibitem{23RBC} Shor, P.W. Scheme for reducing decoherence in quantum computer memory, {\it Phys. Rev. A} {\bf 52} (1995) 2493-2496.

\bibitem{3hagiw} Shor, P.W., Preskill, J. Simple proof of security of the BB84 quantum key distribution protocol, {\it Phys. Rev. Lett.} {\bf 85} (2000) 441-444.

\bibitem{SS} Smith, G., Smolin, J. Putting ``quantumness" to the test, {\it Physics} {\bf 6} 105 (2013).

\bibitem{95kkk} Steane, A.M. Simple quantum error correcting codes, {\it Phys. Rev. Lett.} {\bf 77} (1996) 793-797.

\bibitem{Steane-2} Steane, A.M. Multiple particle interference and  quantum error correction, {\it Proc. Roy. Soc. London A } {\bf 452} (1996) 2551-2577.


\bibitem{Steane-E} Steane, A.M. Enlargement of Calderbank-Shor-Steane quantum codes, {\it IEEE Trans. Inform. Theory } {\bf 45} (1999) 2492-2495.

\bibitem{26RBC} Wootters W.K., Zurek, W.H. A single quantum cannot be cloned, {\it Nature}
{\bf 299} (1982) 802-803.

\bibitem{f2} Yu, S. Bierbrauer, J., Dong, Y., Chen, Q., Oh, C.H. All the stabilizer codes of distance 3, {\it IEEE Trans. Inform. Theory } {\bf 59} (2013) 5179-5185.

\end{thebibliography}
\end{document}